\renewcommand{\eqref}[1]{(\ref{#1})}
\newcommand{\figref}[1]{\mbox{\figurename~\ref{#1}}}
\renewcommand{\mid}{\ensuremath{\,|\,}}
\newcommand{\abs}[1]{\ensuremath{\left|#1\right|}}
\newcommand{\setO}{\ensuremath{\mathcal{O}}}
\newcommand{\bma}{\ensuremath{\mathbf{a}}}
\newcommand{\bme}{\ensuremath{\mathbf{e}}}
\newcommand{\bmf}{\ensuremath{\mathbf{f}}}
\newcommand{\bmg}{\ensuremath{\mathbf{g}}}
\newcommand{\bmh}{\ensuremath{\mathbf{h}}}
\newcommand{\bmn}{\ensuremath{\mathbf{n}}}
\newcommand{\bmr}{\ensuremath{\mathbf{r}}}
\newcommand{\bms}{\ensuremath{\mathbf{s}}}
\newcommand{\bmu}{\ensuremath{\mathbf{u}}}
\newcommand{\bmv}{\ensuremath{\mathbf{v}}}
\newcommand{\bmw}{\ensuremath{\mathbf{w}}}
\newcommand{\bmx}{\ensuremath{\mathbf{x}}}
\newcommand{\bmy}{\ensuremath{\mathbf{y}}}
\newcommand{\bA}{\ensuremath{\mathbf{A}}}
\newcommand{\bD}{\ensuremath{\mathbf{D}}}
\newcommand{\bE}{\ensuremath{\mathbf{E}}}
\newcommand{\bF}{\ensuremath{\mathbf{F}}}
\newcommand{\bG}{\ensuremath{\mathbf{G}}}
\newcommand{\bH}{\ensuremath{\mathbf{H}}}
\newcommand{\bI}{\ensuremath{\mathbf{I}}}
\newcommand{\bL}{\ensuremath{\mathbf{L}}}
\newcommand{\bW}{\ensuremath{\mathbf{W}}}
\newcommand{\bX}{\ensuremath{\mathbf{X}}}
\newcommand{\bZero}{\ensuremath{\mathbf{0}}}
\newcommand{\bDelta}{\ensuremath{\mathbf{\Delta}}}
\newcommand{\MT}{{\ensuremath{U}}}
\newcommand{\MR}{{\ensuremath{B}}}
\newcommand{\Es}{{\ensuremath{E_{\mathrm{s}}}}}
\newtheorem{thm}{Theorem}
\newtheorem{lem}[thm]{Lemma}
\renewenvironment{proof}[1][\proofname]{\par
  \pushQED{\qed}%
  \normalfont \topsep6\p@\@plus6\p@\relax
  \trivlist
  \item[\hskip\labelsep
        \scshape
    #1\@addpunct{.}]\ignorespaces
}{%
  \popQED\endtrivlist\@endpefalse
}
\newcommand{\revision}[1]{#1}
\title{Large-Scale MIMO Detection for 3GPP LTE: Algorithms and FPGA Implementations}
\author{Michael Wu, Bei Yin, Guohui Wang, Chris Dick, Joseph R. Cavallaro, and  Christoph~Studer

\thanks{M.~Wu and B.~Yin  contributed equally to the paper. }
\thanks{Parts of this paper for a large-scale, point-to-point MIMO-OFDM system have been presented at the IEEE International Symposium on Circuit and Systems (ISCAS) \cite{Wu2012} and the IEEE International Conference on Acoustics, Speech, and Signal Processing (ICASSP)~\cite{Yin2013}.}
\thanks{\revision{M.~Wu, B.~Yin, G.~Wang, and J.~Cavallaro are with the Dept.~of ECE, Rice University, Houston, TX (e-mail: \{mbw2,\,by2,\,wgh,\,cavallar\}@rice.edu).}} 
\thanks{\revision{C.~Dick is with Xilinx, Inc., San Jose, CA (e-mail: chris.dick@xilinx.com).}}
\thanks{\revision{C.~Studer is with the School of ECE, Cornell University, Ithaca, NY (e-mail: studer@cornell.edu)}}
\thanks{This work was supported in part by Xilinx and by the US National Science Foundation under grants CNS-1265332, ECCS-1232274, EECS-0925942, and CNS-0923479.}
 
}
\begin{document}

\maketitle

\begin{abstract}
\revision{Large-scale (or massive) multiple-input multiple-output (MIMO) is expected to be one of the key technologies in next-generation multi-user cellular systems based on the upcoming 3GPP LTE Release~12 standard, for example. 
In this work, we propose---to the best of our knowledge---the first VLSI design enabling high-throughput data detection in single-carrier frequency-division multiple access (SC-FDMA)-based large-scale MIMO systems.}
\revision{We propose a new approximate matrix inversion algorithm relying on a Neumann series  expansion, which substantially reduces the complexity of linear data detection. We analyze the associated error,  and we compare its performance and complexity to those of an exact linear detector. We present corresponding VLSI architectures, which perform exact and approximate soft-output detection for large-scale MIMO systems with various antenna/user configurations. 
Reference implementation results for a Xilinx \mbox{Virtex-7} XC7VX980T FPGA show that our designs are able to achieve more than $\mathbf{600}$\,Mb/s for a $\mathbf{128}$ antenna, $\mathbf{8}$ user 3GPP LTE-based large-scale MIMO system.} We finally provide a performance/complexity trade-off comparison using the presented FPGA designs, which reveals that the \revision{detector} circuit of choice is determined by the ratio between BS antennas and users, as well as the desired error-rate performance. 

\end{abstract}

\begin{keywords}
\revision{Approximate matrix inversion, FPGA design, large-scale (or massive) MIMO, linear soft-output detection, minimum mean square error (MMSE), Neumann series, VLSI.}
\end{keywords}


\section{Introduction}

Multiple-input multiple-output (MIMO) in combination with spatial multiplexing~\cite{Paulraj2008} builds the foundation of most modern wireless communication standards, such as 3GPP LTE~\cite{3GPPLTE,SesiaLTE,3GPPLTEA} or IEEE 802.11n\cite{IEEE802.11n}. 
\revision{MIMO technology offers 
significantly higher data rates over single-antenna systems by transmitting multiple data streams concurrently and in the same frequency band.} \revision{Conventional MIMO wireless systems, however, already start to approach their throughput limits.} Consequently, the deployment of novel transceiver technologies is of paramount importance in order to meet the ever-growing demand for higher data rates, better link reliability, and improved coverage, without further increasing the communication bandwidth \cite{Marzetta2010,Rusek2012,Nam2013}.

\subsection{Blessing and Curse of Massive MIMO}

Large-scale (or massive) MIMO  is an emerging technology, which postulates the use of antenna arrays having orders of magnitude more elements at the base station (BS) compared to conventional (small-scale) MIMO systems, while  serving tens of users simultaneously and in the same frequency band~\cite{Marzetta2010}. This technology promises significant improvements in terms of spectral efficiency, link reliability, and \revision{coverage compared to conventional (small-scale) systems} \cite{Rusek2012, Huh2011,Ngo2012}. 

Unfortunately, the promised benefits of large-scale MIMO come at the cost of significantly increased computational complexity in the BS, as opposed to small-scale MIMO systems, which commonly deploy $2$-to-$4$ antennas at both ends of the wireless link.
In particular, data detection in the large-scale MIMO uplink is expected to be \revision{among} the most critical tasks in terms of complexity and power consumption, as the presence of hundreds of antennas at the BS and a large number of users will increase the computational complexity by orders of magnitude.
In addition, current cellular systems, such as 3GPP-LTE~\cite{3GPPLTE,SesiaLTE} or LTE-Advanced (LTE-A)~\cite{3GPPLTEA}, rely on single-carrier frequency division multiple access (SC-FDMA), which further increases the dimensionality \revision{(and hence the complexity)} of the underlying detection problem. 
\revision{As a consequence,} optimal \revision{data} detection methods, such as maximum-likelihood (ML) detection~\cite{agrell2002closest,ABurgThesis,burg2005vlsi} or soft-output sphere decoding~(SD)~\cite{wong2002vlsi,HB03,jsac07}, whose \revision{(average)} computational complexity scales exponentially in the number of transmitted data streams~\cite{jalden2005complexity,seethaler2011complexity}, would simply result in prohibitive complexity. 
\revision{Hence,} one has to resort to low-complexity (but sub-optimal) linear detection schemes~\cite{Rusek2012} or stochastic detection algorithms~\cite{Datta2012} that deliver acceptable error-rate performance and scale favorably to the high-dimensional detection problems faced in SC-FDMA-based large-scale MIMO systems.

\subsection{Contributions}
This paper addresses the complexity issue of data detection in SC-FDMA-based large-scale MIMO systems in the uplink, i.e., where multiple users communicate with the BS.
\revision{We focus on linear soft-output detection in combination with a new approximate matrix inversion method relying on a  Neumann series expansion, which significantly reduces the computational complexity compared to that of an exact matrix inversion method.}
We analyze the implementation trade-offs associated with approximate and exact linear data detection in the large-scale MIMO uplink, and we show analytically that the approximation error caused by the proposed approximate inversion method depends on the ratio between BS antennas and users. 
\revision{We show that the proposed approximation performs well for medium to large ratios between BS antennas and users, while exact linear detection is advantageous for small antenna ratios.}
We  present reference FPGA designs for both, the \revision{approximate and exact matrix inversion,} and for various antenna configurations, which enables us to characterize the associated hardware complexity vs.~error-rate performance trade-offs. The resulting FPGA designs are---to the best of our knowledge---the first data detection engines for massive MIMO systems reported in the \revision{open} literature that achieve a peak uplink throughput exceeding the $300$\,Mb/s specified in \revision{3GPP LTE-Advanced} operating at 20\,MHz bandwidth~\cite{3GPPLTEA}.

\subsection{Notation}
\label{sec:notation}
Lowercase boldface letters stand for column vectors; uppercase boldface letters designate matrices. For a matrix~\bA, we denote its transpose and conjugate transpose  $\bA^T$ and  $\bA^H$, respectively. 
The entry in the $k^\text{th}$ row and $\ell^\text{th}$ column of a matrix $\bA$ \revision{is} denoted by $A_{k,\ell}$; the $k^\text{th}$ entry of a vector $\bma$ is designated by $a_k$. The Frobenius norm and $\ell_2$-norm of a matrix~$\bA$ and vector~$\bma$ \revision{are} denoted by $\|\bA\|_F$ and $\|\bma\|_2$, respectively. 
The $M\times M$ identity matrix is denoted by $\bI_M$, and~$\bF_M$ refers to the  $M\times M$ discrete Fourier transform~(DFT) matrix, normalized as $\bF_M^H\bF_M=\bI_M$.
In order to simplify notation, we make frequent use \revision{of} the superscript $(\cdot)^{(i,j)}$ to indicate the $i^\text{th}$ base-station antenna and $j^\text{th}$ user; the subscript~$(\cdot)_w$ \revision{designates} the SC-FDMA subcarrier index.

\subsection{Paper Outline}

The remainder of the paper is organized as follows. 
Section~\ref{sec:system}  introduces the uplink system model and outlines \revision{the basics of} linear detection for \revision{SC-FDMA-based} systems. 
The approximate matrix inversion approach, a corresponding error analysis, and an error-rate performance/complexity comparison \revision{are} shown in Section~\ref{sec:algorithm}.
Section~\ref{sec:architecture} details \revision{our} VLSI architecture. Section~\ref{sec:implementation} provides \revision{reference} FPGA implementation results and a trade-off analysis. 
We conclude in Section~\ref{sec:conclusions}. All proofs are relegated to the Appendices.


\section{Large-Scale MIMO in LTE Uplink}
\label{sec:system}

We next introduce the LTE uplink model and \revision{present a new and} efficient method for linear soft-output minimum mean-square error (MMSE) detection in SC-FDMA-based systems. 

\subsection{LTE Uplink Model}

We consider the large-scale multi-user (MU) MIMO uplink with~$\MR$ antennas at the base-station (BS) communicating with $\MT\leq\MR$ single-antenna users.\footnote{More generally, instead of having $\MT$ single-antenna users, the proposed system may equivalently support $\MT$ spatial streams, which can, for example, be shared among a smaller number of user terminals that are equipped with more than one antenna.}
 To reduce \revision{the} peak-to-average power ratio of \revision{the} user equipment, LTE uplink employs SC-FDMA (short for single-carrier frequency division multiple access)~\cite{SesiaLTE}. The $\MT$ users first encode their own transmit \revision{bits} using channel \revision{encoders} and then, map the coded bit \revision{stream} to \revision{time-domain} constellation points in the finite alphabet $\setO$ with cardinality $M=\abs{\setO}$ and average transmit power~$E_{s}$ per symbol. 
\revision{An $L$-point discrete Fourier transform (DFT) block\footnote{In practice, the DFT and inverse DFT are carried out by fast (inverse) Fourier transform (I/FFT) units.} is used to perform modulation of these time-domain symbols onto orthogonal frequency bands. The $L$ time-domain constellation points for the $i^\text{th}$ user are subsumed in the vector $\bmx^{(i)}=\big[x^{(i)}_1,\ldots,x^{(i)}_L\big]^T$. The output of the DFT block, namely the frequency-domain symbol, is defined as $\bms^{(i)}  = \big[\bms^{(i)}_1,\ldots,\bms^{(i)}_L\big]^T = \bF_L\bmx^{(i)}$. } Subsequent processing performed for each user corresponds \revision{to that} of conventional orthogonal frequency-division multiplexing (OFDM) transmission~\cite{OFDM2004}. Specifically, for each user, the frequency-domain symbols are first mapped onto data-carrying subcarriers and then, transformed back to \revision{the} time domain with an inverse  DFT (IDFT). After \revision{prepending} the cyclic prefix to the time-domain symbols, all $\MT$ users  transmit their time-domain signals simultaneously over the wireless channel.

At the BS, each receive antenna obtains a mixture of the time-domain signals from all users. For data detection, the time-domain signals received at each antenna are first transformed back into the frequency domain using a DFT. The data-carrying symbols are then extracted from the DFT's output. Assuming a sufficiently long cyclic-prefix~(i.e., longer than the delay spread of the channel's impulse response), the received \revision{frequency-domain} symbols can be modeled using the standard input-output relation $\bmy = \bH\bms +\bmn$,
with the following definitions:
\begin{align*}
\bmy &= \left[\begin{array}{c}
\bmy^{(1)}\\[-0.2cm]
\vdots\\
\bmy^{(\MR)}
\end{array}\right], \quad  \bH=\left[\begin{array}{ccc}
\bH^{(1,1)} & \cdots & \bH^{(1,\MT)}\\[-0.2cm]
\vdots & \ddots & \vdots\\
\bH^{(\MR,1)} & \cdots & \bH^{(\MR,\MT)}
\end{array}\right], \\ 
\bms&=\left[\begin{array}{c}
\bms^{(1)}\\[-0.2cm]
\vdots\\
\bms^{(\MT)}
\end{array}\right], \quad \text{and}\,\,\, \bmn = \left[\begin{array}{c}
\bmn^{(1)}\\[-0.2cm]
\vdots\\
\bmn^{(\MR)}
\end{array}\right]. 
\end{align*}
Here, the vector $\bmy^{(i)} = \big[y^{(i)}_1, \ldots, y^{(i)}_L\big]^T$ contains the received symbols on the $i^\text{th}$ antenna in the frequency domain, where $y^{(i)}_w$ is the symbol received on the \revision{$w^\text{th}$} subcarrier of the $i^\text{th}$ antenna.  The $L\times L$ diagonal matrix $\bH^{(i,j)}=\text{diag}\big(h^{(i,j)}_1,\ldots,h^{(i,j)}_L\big)$ contains the channel's frequency response of length $L$ between the $i^\text{th}$ receive antenna and $j^\text{th}$ transmit antenna on its main diagonal, and $\bmn^{(i)} = \big[n^{(i)}_1, \ldots, n^{(i)}_L\big]^T$ models thermal noise at the $i^\text{th}$ receive antenna in the frequency domain. The entries of  the vector~$\bmn^{(i)}$ are assumed to be i.i.d.\ zero-mean Gaussian with variance $N_0$ per complex entry. 

\subsection{Linear MMSE Detection}
\label{sec:lindetection}

The task of a data detector for MIMO systems is to compute soft-estimates in the form of log-likelihood ratio (LLR) \revision{values} for each coded bit, given the channel matrix\footnote{In practice, channel-state information is acquired using pilot sequences specified by the standard~\cite{3GPP_TS_36.211_v8.6.0}. For the sake of simplicity, we assume perfect channel state information (CSI) throughout the paper. An investigation of the  impact of imperfect CSI on the error-rate performance is left for future work.} $\bH$ and receive vector~$\bmy$. 
In order to arrive at low computational complexity for data detection in SC-FDMA-based large-scale MIMO systems, we focus exclusively on linear soft-output detection \cite{khan2009lte}. 
Linear detection for SC-FDMA mainly consists of the following two steps: (i) \emph{channel equalization} to generate estimates of the frequency domain symbols,  and (ii) \emph{soft-output computation} to generate LLRs from the equalized frequency domain symbols. 
Both of these steps are detailed next. 

\subsubsection*{(i) Channel equalization}
The most common approach to linear MIMO detection is the minimum-mean square error (MMSE) equalizer, which computes equalized frequency-domain symbols as   $\hat{\bms}=\bW\bmy$ with the MMSE equalization matrix  defined as follows~\cite{Paulraj2008}:
\begin{align*}
\bW=\!\left(\bH^H\bH+{N_0}{\Es^{\!-1}}\bI_{L\MT}\right)^{-1}\bH^H.
\end{align*}
Since the effective channel matrix $\bH$ is built from diagonal $L\times L$ submatrices, we can apply MMSE equalization on a per-subcarrier basis. 
Specifically, the received frequency symbols on the $w^\text{th}$  subcarrier in the frequency domain can be modeled as $\bmy_w = \bH_w\bms_w +\bmn_w$, where 
\begin{align*}
\bmy_w &= \left[\begin{array}{c}
y_w^{(1)}\\[-0.2cm]
\vdots\\
y_w^{(\MR)}
\end{array}\right], \quad \bH_w = \left[\begin{array}{ccc}
h^{(1,1)}_w & \cdots & h^{(1,\MT)}_w\\[-0.2cm]
\vdots & \ddots & \vdots\\
h^{(\MR,1)}_w & \cdots & h^{(\MR,\MT)}_w
\end{array}\right], \\
\bms_w&=\left[\begin{array}{c}
\!\!s^{(1)}_w,\ldots,
s^{(\MT)}_w\!\!
\end{array}\right]^T, \quad \text{and}\quad\bmn_w=\left[\begin{array}{c}
\!\!n^{(1)}_w,\ldots,
n^{(\MR)}_w\!\!
\end{array}\right]^T\!\!.
\end{align*}
Here, $y^{(i)}_w$ is the frequency symbol received on the $w^\text{th}$ subcarrier for the $i^\text{th}$ antenna, \revision{and} $h^{(i,j)}_w$ is the frequency gain (or attenuation) on the $w^\text{th}$ subcarrier between the $i^\text{th}$ receive antenna and $j^\text{th}$ transmit antenna. The scalar $s^{(j)}_w$ denotes the symbol transmitted by the $j^\text{th}$ user on the $w^\text{th}$ subcarrier; the scalar $n^{(i)}_w$ models thermal noise at the $i^\text{th}$ receive antenna on the~$w^\text{th}$ subcarrier.
With this reformulation, the equalized symbols on the $w^\text{th}$ subcarrier are given by $\hat{\bms}_w= \bW_w\bmy_w$, with the per-subcarrier MMSE equalization matrix defined as
\begin{align} \label{eq:subcarrierMMSE}
\bW_w= \!\left(\bH_w^H\bH_w+{N_0}{\Es^{\!-1}}\bI_{\MT}\right)^{-1}\bH_w^H.
\end{align}

A key method to arrive at low-complexity linear MMSE detection was put forward in~\cite{Studer2011}. This approach  first computes the matched-filter~(MF) output as $\bmy^\text{MF}_w = \bH_w^H\bmy_w$ and the Gram matrix $\bG_w=\bH_w^H\bH_w$ for each subcarrier \revision{$w$}, followed by forming the regularized Gram matrix $\bA_w=\bG_w+{N_0}{\Es^{\!-1}}\bI_{\MT}$. The equalized symbols per subcarrier are \revision{then} computed as $\hat{\bms}_w =\bA_w^{-1}{\bmy}^\text{MF}_w$, which requires the \emph{explicit} computation of \revision{a} $\MT\times\MT$-dimensional matrix inverse.\footnote{We are aware of the fact that the estimate $\hat{\bms}_w$ could be computed without forming the explicit inverse  $\bA_w^{-1}$, e.g., via the Cholesky decomposition combined with  forward/backward substitution~\cite{GV96}. However,  soft-output detection as performed here requires the  explicit inverse $\bA_w^{-1}$ to compute the post-equalization SINR (see Section~\ref{subsubsection:exactllr}).}

\subsubsection*{(ii) LLR computation}\label{subsubsection:exactllr}
To obtain  symbol estimates in the time domain, the MMSE detector performs an IDFT on the equalized frequency domain symbols for each user. The time-domain symbol estimates for the $i^\text{th}$ user are given by $\hat{\bmx}^{(i)}=\bF_L^H \hat{\bms}^{(i)}$, where $\bF_L^H$ is the IDFT matrix and $\hat{\bmx}^{(i)}=\big[\hat{x}^{(i)}_1,\ldots,\hat{x}^{(i)}_L\big]^T$ contains the time-domain symbol estimates of the symbols transmitted by the $i^\text{th}$ user. To extract LLRs from the time-domain symbol estimates, we approximate each estimate as an independent Gaussian random variable. 
In particular,  the estimated $t^\text{th}$ symbol transmitted from the $i^\text{th}$ user is modeled as $\hat{x}^{(i)}_t=\mu^{(i)} x^{(i)}_t+{e^{(i)}_t}$, where $\mu^{(i)}$ is the effective channel gain and~$e^{(i)}_t$ is the post-equalization noise-plus-interference~(NPI) variance.
Let $\nu_i^2$ be the variance of~$e^{(i)}_t$ and $b$ be the bit index of the LLR associated with the  $t^\text{th}$ symbol transmitted from the $i^\text{th}$ user. With this model, the max-log LLRs can  be computed as~\cite{fossorier1998equivalence,Studer2011} 
\begin{align}\label{eq:maxlogllr}
L^{(i)}_{t}(b) = \rho^2_i\!\left(\min_{a\in\setO_b^{0}}\left|{\frac{\hat{x}^{(i)}_t}{\mu^{(i)}}}-a\right|^2 - \min_{a'\in\setO_b^{1}}\left|{\frac{\hat{x}^{(i)}_t}{\mu^{(i)}}}-a'\right|^2\right),
\end{align}
where $\rho^2_i={\left(\mu^{(i)}\right)^2}/{\nu_i^2}$ is the post-equalization signal-to-noise-plus-interference ratio (SINR), and $\setO_b^{0}$ and $\setO_b^{1}$ correspond to the sets of constellation symbols for which the $b^\text{th}$ bit equals to $0$ and $1$, respectively. 

In order to obtain an explicit formulation of the effective channel gain $\mu^{(i)}$ as well as the NPI variance~$\nu^2_i$, we can write the $t^\text{th}$ symbol estimate of the $i^\text{th}$ user as follows:
\begin{align*}
\hat{x}^{(i)}_t &= \bmf^{H}_t\hat{\bms}^{(i)} = \bmf^{H}_t\bW^{(i,:)}\bmy.
\end{align*}
Here, $\bW^{(i,:)}=\left[\bW^{(i,1)},\ldots,\bW^{(i,\MR)}\right]$ is a horizontal concatenation of the $i^\text{th}$ block row of (diagonal) submatrices of $\bW$. The row vector $\bmf^{H}_t$ corresponds to the $t^\text{th}$ row of the IDFT matrix $\bF_L^H$. 
Let $\bH^{(:,j)}=\!\left[\bH^{(1,j)},\ldots,\bH^{(\MR,j)}\right]^T$ be the horizontal concatenation of the $j^\text{th}$ block column of (diagonal) submatrices of $\bH$, consisting of the frequency-domain channel responses between the receive antennas and the transmit antenna associated with the $j^\text{th}$ user. 
We first compute the effective channel gain: 
\begin{align*}
\mu^{(i)} x^{(i)}_t &=\mathsf{E}\!\left[\bmf^{H}_t\bW^{(i,:)}\bmy\mid x^{(i)}_t \right]  
 = L^{-1}\mathrm{tr}(\bW^{(i,:)}\bH^{(:,i)})x^{(i)}_t.
\end{align*}
Since $\bW^{(i,j)}$ and $\bH^{(i,j)}$ are both diagonal matrices, we can write $\mu^{(i)}$ as a sum of per-subcarrier operations. In particular, let $\bmw^H_{i,w}$  be the $i^\text{th}$ row of $\bW_w$ and $\bmh_{i,w}$ be the $i^\text{th}$ column of $\bH_w$. Then, we obtain the effective channel gain as
\begin{align}\label{eq:mu}
 \mu^{(i)} = L^{-1}\sum_{w=1}^L{\bmw^H_{i,w}\bmh_{i,w}}.
\end{align}

We next compute the post-equalization NPI variance $\nu_i^2$ of the residual noise plus interference as
\begin{align*}
&\nu_i^2 = \mathsf{E}\!\left[\abs{\hat{x}^{(i)}_{t}}^2\right]- \mathsf{E}\!\left[\abs{\mu^{(i)} x^{(i)}_t}^2\right] \\
  &= \mathsf{E}\!\left[\bmf^{H}_t\bW^{(i,:)}(\bH\bms +\bmn)(\bH\bms +\bmn)^H(\bW^{(i,:)})^H\bmf_t\right]\!-\! E_s\!\abs{\mu^{(i)}}^2\!\!.
 \end{align*}
The MMSE equalization matrix can be written in two ways~\cite{Studer2011}, i.e., either 
\begin{align*}
\bW&=\!\left(\bH^H\bH+{N_0}{\Es^{\!-1}}\bI_{L\MT\times L\MT}\right)^{-1}\bH^H \,\,\, \text{or} \\
\bW&=\!\bH^H\left(\bH\bH^H+{N_0}{\Es^{\!-1}}\bI_{L\MR\times L\MR}\right)^{-1}.
\end{align*}
\revision{Hence, we have $\bW\!\left(E_s\bH\bH^H + N_0\bI_{L\MR\times L\MR}\right)=E_s\bH^H$; this allows us to rewrite the post-equalization NPI in compact form as follows~\cite{Studer2011}:
\begin{align}\label{eq:approxexactllr} \nu_i^2=E_s\mu^{(i)}-E_s\!\abs{\mu^{(i)}}^2. \end{align}
We emphasize that both parameters $\mu^{(i)}$ and $\nu^2_i$ are functions of $\bA^{-1}_w$, $\forall w$. Consequently,  \revision{an explicit computation} of the inverses $\bA_w^{-1}$, $\forall w$, is necessary for the computation of LLR values using the approach detailed above.}

\section{Approximate MMSE Detection \\ via Neumann Series Expansion}
\label{sec:algorithm}

The computation of all per-subcarrier inverses $\bA^{-1}_{w}$, $\forall w$, in \eqref{eq:subcarrierMMSE} is responsible for the main computational complexity of linear MMSE detection in SC-FDMA-based large-scale MIMO systems. 
For a conventional small-scale LTE uplink scenario, i.e., where the number of receive antennas $\MR$ and users $\MT$ is small (on the order of $\MT,\MR\leq 6$), existing VLSI designs  for linear detection, such as~\cite{burg2006algorithm,rao2010low,luethi2008gram}, compute the exact inverse explicitly. 
For  large-scale MIMO systems with a large number of users $\MT$ however, the computation of the inverse $\bA_w^{-1}$ can quickly result in excessive complexity. 
Hence, practical solutions for large-scale MIMO detection in LTE necessitate \revision{low-complexity} matrix inversion methods---a corresponding approximate solution is proposed next.

\subsection{Neumann Series Approximation}

For large-scale MIMO systems, where the number of receive antennas is larger than the number of \revision{single-antenna} users, i.e., for $\MT\ll\MR$, the Gram matrices $\bG_w$, and, consequently~$\bA_w$, become diagonally dominant~\cite{Rusek2012}. In fact, for \revision{i.i.d.\ Gaussian} channel matrices $\bH_w$ (with properly normalized entries) and in the large antenna limit, \cite{Marzetta2010} shows that $\bG_w\rightarrow\bI_\MT$. 
Inspired by this central property of large-scale MIMO, one can derive a low-complexity approximation of the inverse. In particular, let $\bA_w\approx\bD_w$, where $\bD_w$ is the main diagonal  of $\bA_w$. 
As a result, the inverse $\bA^{-1}_w$ can be approximated by $\bD_w^{-1}$, which requires evidently much lower complexity than that of the exact inverse. Unfortunately, for realistic antenna/user configurations, such a crude approximation would cause a significant performance loss.
Hence, to arrive at an accurate approximation of the inverse at low computational complexity, we propose to \revision{use a Neumann series expansion.} 

We start by rewriting the inverse  $\bA_w^{-1}$ with the following Neumann series \revision{expansion}~\cite{Stewart1998}:
\begin{align} \label{eq:origseries}
\bA_w^{-1}=\sum_{n=0}^{\infty}\left(\mathbf{\mathbf{X}}^{-1}\left(\mathbf{X}-\bA_w\right)\right)^{n}\mathbf{\mathbf{X}}^{-1},
\end{align}
which holds if $\lim_{n\to\infty}(\mathbf{I}-\mathbf{X}^{-1}\bA_w)^{n}=\bZero_{\MT\times\MT}$ is satisfied.
By decomposing the regularized Gram matrix~$\bA_w$ such that $\bA_w=\bD_w+\bE_w$, where $\bD_w$ is the main diagonal of $\bA_w$ and~$\bE_w$ \revision{is} the hollow, regularized Gram matrix, we can rewrite  the Neumann series  in \eqref{eq:origseries} as
\begin{align}\label{eq:series}
\bA^{-1}_w =  \sum_{n=0}^{\infty}(-\bD_w^{-1}\bE_w)^{n}\bD_w^{-1},
\end{align}
\revision{where we substitute $\bX$ in \eqref{eq:origseries} by~$\bD_w$}. Note that if $\lim_{n\to\infty}(-\bD^{-1}_w\bE_w)^{n}=\bZero_{\MT\times\MT}$, then the series expansion in~\eqref{eq:series} is guaranteed to converge.

The key idea of the proposed approximate inversion method is to keep only the first $K$ terms of the Neumann series~\eqref{eq:series}. Concretely, we compute a $K$-term approximation as follows:
\begin{align} \label{eq:approximation}
\widetilde{\bA}^{-1}_{w\mid K} =  \sum_{n=0}^{K-1}(-\bD_w^{-1}\bE_w)^{n}\bD_w^{-1},
\end{align}
which can be computed at low computational complexity for approximations consisting of only a few Neumann series terms, i.e., for small values of $K$.
With this approximation, the resulting  \emph{approximate} MMSE equalization matrix is given by $\widetilde{\bW}_{w\mid K} = \widetilde{\bA}^{-1}_{w\mid K}\bH_w^H$.
For $K=1$, we obtain 
$\widetilde{\bA}^{-1}_{w\mid 1}  = \bD_w^{-1}$,
which is simply a scaled version of the MF detector, as $\widetilde{\bW}^{-1}_{w\mid 1}  = \bD_w^{-1}\bH_w^H$. We emphasize that the row-wise scaling induced by $\bD_w^{-1}$ does not affect the detection process, as long as~$\bD_w^{-1}$ exists. Hence, the proposed approximation \eqref{eq:approximation} simply coincides with the MF detector for $K=1$.
For $K=2$, we obtain
$\widetilde{\bA}^{-1}_{w\mid2}  = \bD_w^{-1} - \bD_w^{-1}\bE_w\bD_w^{-1}$,
whose computational complexity only scales with $O(\MT^2)$ operations; this \revision{is} in contrast to the $O(\MT^3)$ complexity scaling required by computing an exact inverse. Hence, a second-order Neumann series approximation can be obtained at lower computational  complexity. For $K=3$, we obtain
\begin{align} \label{eq:threetermapprox}
\widetilde{\bA}^{-1}_{w\mid3}  = \bD_w^{-1} - \bD_w^{-1}\bE_w\bD_w^{-1}+\bD_w^{-1}\bE_w\bD^{-1}\bE_w\bD^{-1}_w,
\end{align}
whose complexity scales with $O(\MT^3)$, which is equivalent to that of an exact inverse. Nevertheless, evaluating~\eqref{eq:threetermapprox} requires fewer arithmetic operations than an explicit evaluation of $\bA^{-1}$. 
Note that for $K\geq4$, computing the exact inverse can be of lower complexity than the proposed approximation, e.g., when using a Cholesky factorization (see Section~\ref{sec:sims}).\footnote{For approximations with $K=2^n$ terms and $n\geq2$, efficient ways of evaluating \eqref{eq:approximation} exist. In particular, a clever re-arrangement and factorization of terms yields solutions which only require $2(n-1)$ matrix multiplications.}

\subsection{Analysis of the Approximation Error}
\label{sec:approxerror}

We next analytically characterize the error induced by the approximate inverse \eqref{eq:approximation} for MMSE estimation. To this end, we define the approximation error as
$\bDelta_{w\mid K}	= \bA_w^{-1} - \widetilde{\bA}_{w\mid K}^{-1}$, 
which is equivalent to 
\begin{align*}
		\bDelta_{w\mid K}	& = \sum_{n=K}^{\infty}(-\bD_w^{-1}\bE_w)^{n}\bD_w^{-1} \\
		& =\left(-\bD_w^{-1}\bE_w\right)^{K}\sum_{n=0}^{\infty}(-\bD_w^{-1}\bE_w)^{n}\bD_w^{-1} \\
		& =\left(-\bD_w^{-1}\bE_w\right)^{K}\bA_w^{-1}.
\end{align*}
Now, consider the situation of using the approximate $\widetilde{\bA}^{-1}_{w\mid K}$ in place of $\bA_w^{-1}$ to compute the equalized frequency-domain symbols, i.e., 
\begin{align*} 
\hat\bms_{w\mid K} = \widetilde{\bA}^{-1}_{w\mid K} \bH_w^H\bmy_w 
= \bA_w^{-1}\bmy^\text{MF}_w-\bDelta_{w\mid K}\bmy^\text{MF}_w
\end{align*}
with $\bmy^\text{MF}_w= \bH_w^H\bmy_w$ and $\hat\bms_w=\bA_w^{-1}\bmy^\text{MF}_w$ being the exact estimate.
We can bound the $\ell_2$-norm of the \emph{residual estimation error} resulting from this approximate equalization \revision{by}
\begin{align} 
\|\bDelta_{w\mid K} \bmy^\text{MF}_w\|_2 &= \|(-\bD^{-1}_w\bE_w)^K\bA^{-1}_w \bmy^\text{MF}_w\|_2\notag \\
& \leq \|(-\bD^{-1}_w\bE_w)^K\|_{F} \|\bA^{-1}_w \bmy^\text{MF}_w\|_2 \notag \\
& \leq \|\bD^{-1}_w\bE_w\|^K_{F} \|\hat\bms_w\|_2. \label{eq:errorbound1}
\end{align}
From \eqref{eq:errorbound1}, we see that  if the condition
\begin{align} \label{eq:convergencecriterion}
\|\bD^{-1}_w\bE_w\|_{F}<1
\end{align}
is satisfied, then the approximation error approaches zero exponentially fast \revision{as} \revision{$K\to\infty$}. 
Moreover, one can show that~\eqref{eq:convergencecriterion} is a sufficient condition for~\eqref{eq:series} to converge.

We now show that the condition $\|\bD^{-1}_w\bE_w\|_{F}<1$ is satisfied \revision{with high probability for large-scale MIMO systems with a larger number of BS antennas $\MR$ than users $\MT$}, and if the entries of $\bH_w\in\mathbb{C}^{\MR\times\MT}$ are assumed to be i.i.d.\ circularly symmetric complex Gaussian with unit variance. 
More specifically, we arrive at a condition that only depends on $\MT$ and $\MR$ for (i) the proposed Neumann series to converge and (ii) the residual approximation error \eqref{eq:errorbound1} to be small.
The following theorem, which is proven in Appendix \ref{app:theoremproof}, makes this behavior explicit. 

\begin{thm}
\label{theorem}
Let $\MR>4$ and the entries of $\bH_w\in\mathbb{C}^{\MR\times\MT}$ be  i.i.d.\ circularly symmetric complex Gaussian with unit variance. Then, we have 
\begin{align} \label{eq:thmcondition}
&\mathrm{Pr}\!\left\{\|\bD_w^{-1}\bE_w\|^K_{F}< \alpha\right\} \notag \\
& 
\,\,\quad \geq 1- \frac{\left(\MT^2\!-\!\MT\right)}{\alpha^{\frac{2}{K}}}\sqrt{\frac{2\MR(\MR\!+\!1)}{(\MR\!-\!1)(\MR\!-\!2)(\MR\!-\!3)(\MR\!-\!4)} }.
\end{align}
\end{thm}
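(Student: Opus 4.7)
The plan is to reduce the probability statement to a Markov inequality applied to $\mathbb{E}\!\left[\|\bD_w^{-1}\bE_w\|_F^2\right]$, and then bound this expectation entry-wise using Cauchy--Schwarz together with closed-form moments of complex Gaussians and inverse chi-square random variables. The exponent $\alpha^{2/K}$ appearing in the bound is a strong hint that the proof should reduce a $K$th-power statement to a second-moment computation, rather than trying to compute all $2K$th moments of $\|\bD_w^{-1}\bE_w\|_F$ directly.

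\textbf{Step 1: Entry-wise decomposition.} First I would write $A_{ii}=\|\bmh_i\|^2+N_0/E_s$ and $E_{ij}=\bmh_i^H\bmh_j$ for $i\neq j$, where $\bmh_i$ is the $i$th column of $\bH_w$, so that the $(i,j)$ entry of $\bD_w^{-1}\bE_w$ equals $\bmh_i^H\bmh_j/(\|\bmh_i\|^2+N_0/E_s)$ off the diagonal and $0$ on it. Hence
\begin{equation*}
\|\bD_w^{-1}\bE_w\|_F^2=\sum_{i\neq j}\frac{|\bmh_i^H\bmh_j|^2}{(\|\bmh_i\|^2+N_0/E_s)^2}\leq \sum_{i\neq j}\frac{|\bmh_i^H\bmh_j|^2}{\|\bmh_i\|^4}.
\end{equation*}

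\textbf{Step 2: Reduction to second moment via Markov.} Since $\|\bD_w^{-1}\bE_w\|_F\geq 0$, the event $\{\|\bD_w^{-1}\bE_w\|_F^K\geq \alpha\}$ coincides with $\{\|\bD_w^{-1}\bE_w\|_F^2\geq \alpha^{2/K}\}$. Applying Markov's inequality gives
\begin{equation*}
\Pr\!\left\{\|\bD_w^{-1}\bE_w\|_F^K\geq \alpha\right\}\leq \frac{\mathbb{E}\!\left[\|\bD_w^{-1}\bE_w\|_F^2\right]}{\alpha^{2/K}},
\end{equation*}
so it remains to bound the numerator by $(\MT^2-\MT)\sqrt{2\MR(\MR+1)/[(\MR-1)(\MR-2)(\MR-3)(\MR-4)]}$.

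\textbf{Step 3: Cauchy--Schwarz and Gaussian moments.} For each off-diagonal pair I would split
\begin{equation*}
\mathbb{E}\!\left[\frac{|\bmh_i^H\bmh_j|^2}{\|\bmh_i\|^4}\right]\leq \sqrt{\mathbb{E}\!\left[|\bmh_i^H\bmh_j|^4\right]}\cdot\sqrt{\mathbb{E}\!\left[\|\bmh_i\|^{-8}\right]}
\end{equation*}
by Cauchy--Schwarz. Conditioning on $\bmh_i$, the inner product $\bmh_i^H\bmh_j$ is circularly symmetric complex Gaussian with variance $\|\bmh_i\|^2$, so $\mathbb{E}[|\bmh_i^H\bmh_j|^4\mid\bmh_i]=2\|\bmh_i\|^4$. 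Because $\|\bmh_i\|^2$ is Gamma$(\MR,1)$-distributed (sum of $\MR$ i.i.d.\ unit-mean exponentials), one has $\mathbb{E}[\|\bmh_i\|^4]=\MR(\MR+1)$, hence $\mathbb{E}[|\bmh_i^H\bmh_j|^4]=2\MR(\MR+1)$. Using the inverse-Gamma moment formula $\mathbb{E}[\|\bmh_i\|^{-2k}]=\Gamma(\MR-k)/\Gamma(\MR)$, valid for $\MR>k$, the requirement $\MR>4$ ensures that $\mathbb{E}[\|\bmh_i\|^{-8}]=1/[(\MR-1)(\MR-2)(\MR-3)(\MR-4)]$ is finite. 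Multiplying the two square roots yields exactly the structure in \eqref{eq:thmcondition}.

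\textbf{Step 4: Assembly.} Summing the per-pair bound over the $\MT^2-\MT$ off-diagonal index pairs and substituting back into the Markov inequality yields the claim. The only subtlety is the choice of Cauchy--Schwarz rather than a conditional-expectation argument (which would in fact give the tighter bound $(\MT^2-\MT)/(\MR-1)$); using Cauchy--Schwarz is what produces the explicit square-root form in the theorem and, correspondingly, the requirement $\MR>4$ so that $\mathbb{E}[\|\bmh_i\|^{-8}]$ exists. I expect Step 3 to be the only step requiring care: keeping track of the complex-Gaussian normalization when computing $\mathbb{E}[|\bmh_i^H\bmh_j|^4\mid\bmh_i]$ and identifying $\|\bmh_i\|^2$ with a Gamma$(\MR,1)$ random variable are the places where sign/constant errors could creep in.
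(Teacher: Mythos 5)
Your proposal is correct and follows essentially the same route as the paper: Markov's inequality reduces the $K$th-power event to a second-moment bound, Cauchy--Schwarz splits each off-diagonal term into $\sqrt{\mathbb{E}[|\bmh_i^H\bmh_j|^4]}\sqrt{\mathbb{E}[\|\bmh_i\|^{-8}]}$, and the two moments evaluate to $2\MR(\MR+1)$ and $1/[(\MR-1)(\MR-2)(\MR-3)(\MR-4)]$ exactly as in the paper's Lemmata 1--3. The only (cosmetic) difference is that you obtain the fourth moment of the inner product by conditioning on $\bmh_i$, which is cleaner than the paper's direct combinatorial expansion, and your observation that a conditional-expectation argument would yield the tighter bound $(\MT^2-\MT)/(\MR-1)$ is a valid aside.
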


We emphasize that this theorem  provides conditions\footnote{The result in~\eqref{eq:thmcondition} also holds for the case where the regularization term $N_0E_s^{-1}$ vanishes, which coincides to ZF detection. 
As a consequence, the condition~\eqref{eq:thmcondition} is rather pessimistic and is likely to be sub-optimal, especially for $N_0E_s^{-1}>0$. The derivation of a tighter condition is left for future work.} for which the Neumann series converges with a certain probability; this can be accomplished by setting $\alpha=1$ and $K=1$ and by inspecting  the convergence condition~\eqref{eq:convergencecriterion}.
 Furthermore, Theorem~\ref{theorem} \revision{provides conditions} for which the residual estimation error~\eqref{eq:errorbound1} is small. 
In both cases, we can see from Theorem \ref{theorem} that increasing the ratio between the number of BS antennas $\MR$ and the number of users $\MT$ increases the probability of convergence. Moreover, for $\alpha<1$, increasing~$K$ also increases the probability that the residual estimation error caused by a $K$-term approximation in \eqref{eq:errorbound1} is smaller than $\alpha$. 

\sloppy

We note that Theorem~\ref{theorem} also provides insight into the behavior in the large-antenna limit, i.e., for $\MR\to\infty$ while $\MT$ is held constant. In this case, we have 
$\mathrm{Pr}\!\left\{\|\bD_w^{-1}\bE_w\|^K_{F}< \alpha\right\}\to 1$ for $\alpha\in(0,1]$, which implies (i) that the Neumann series converges with probability~$1$ and (ii) that the approximation error for any $K$-term approximation is arbitrary small, which includes the MF detector (corresponding to~$K=1$). We note that this behavior is in accordance \revision{with} existing  results for MF detection in large-scale MIMO systems~\cite{Marzetta2010, hoydis2011massive}.

\fussy

\subsection{Channel Gain and NPI Variance Computation}
\label{sec:NPIwithapproximation}

Computation of the max-log LLRs via the proposed Neumann series approximation is carried out by simply replacing the exact inverse ${\bA}_w^{-1}$ by the approximation $\tilde{\bA}_{w\mid K}^{-1}$ to perform MMSE equalization~(\ref{eq:maxlogllr}). For this approximation, the effective channel gain $\tilde{\mu}^{(i)}_K$ and the variance of the residual post-equalization NPI variance $\tilde{\nu}_{i\mid K}^2$ now depend on the number of Neumann series terms.


In order to compute the effective channel gain~$\tilde{\mu}^{(i)}_K$, we first construct the $L\MT\times L\MT$ matrix~$\widetilde{\bW}^{-1}_{\cdot\mid K}$ from the sub-carrier equalization matrices~$\widetilde{\bW}_{w\mid K}^{-1}$, $\forall w$, as explained in Section~\ref{sec:lindetection}. With this, we have~$\tilde{\mu}^{(i)}_K x^{(i)}_t =\mathsf{E}\big[\bmf^{H}_t\widetilde{\bW}_K^{(i,:)}\bmy\mid x^{(i)}_t \big]$, which can be rewritten as in~(\ref{eq:mu}) by replacing $\bW^{(i,:)}$ with $\widetilde{\bW}_{\cdot\mid K}^{(i,:)}$. Consequently, the effective channel gain is given by $\tilde{\mu}^{(i)}_K= L^{-1}\sum_{w=1}^{L}\widetilde{\bmw}^H_{i,w\mid K}{\bmh}_{i,w}$, where  $\widetilde{\bmw}^H_{i,w \mid K}$  is the $i^\text{th}$ row of $\widetilde{\bW}_{w\mid K}^{(i,i)}$ and $\bmh_{i,w\mid K}$ the $i^\text{th}$ column of~$\bH_{w}^{(i,i)}$.


%
In order to compute the post-equalization NPI variance~$\tilde{\nu}_{i|K}^2$ one might assume that it simply corresponds to $E_s\tilde{\mu}_K^{(i)}-E_s|\tilde{\mu}_K^{(i)}|^2$ \revision{as in~\eqref{eq:approxexactllr}}.
Unfortunately, this expression no longer holds, \revision{because of the following fact:} 
\begin{align*}
\widetilde{\bW}_{\cdot\mid K}(E_s\bH\bH^{H}+N_0\bI_{L\MR})\neq\bH^H.
\end{align*} 
Furthermore, the above NPI variance expression is not guaranteed to be non-negative and hence, using it to compute LLR values inevitably results in poor error-rate performance.
As a consequence,  an alternative expression for $\tilde{\nu}_{i|K}^2$ is required when using the approximate matrix inverse for data detection. 
Following the steps of the derivation of $\tilde{\nu}_i^2$ in Section~\ref{subsubsection:exactllr} and by replacing $\bW^{(i,:)}$ with $\widetilde{\bW}_{\cdot\mid K}^{(i,:)}$, the exact post-equalization NPI variance can be expressed as:
\begin{align}
\tilde{\nu}_{i|K}^2 = & \bmf^{H}_t\widetilde{\bW}_{\cdot\mid K}^{(i,:)}(E_s\bH\bH^{H}+ \ldots \notag \\  
& \quad N_0\bI_{L\MR})(\widetilde{\bW}_{\cdot\mid K}^{(i,:)})^H\bmf_t- E_s\!\abs{\tilde{\mu}_K^{(i)}}^2.\label{eq:varneumannexact}
\end{align}
Since $\bH^H(E_s\bH\bH^{H}+N_0\bI_{L\MR}) = (E_s\bH^{H}\bH+N_0\bI_{L\MT})\bH^H$, we have: 
\begin{align*}
\tilde{\nu}_{i|K}^2 
			&=E_s\bmf^{H}_t(\widetilde{\bA}_{\cdot\mid K}^{-1})^{(i,:)}\bA\bG(\widetilde{\bA}_{\cdot\mid K}^{-1})^{(i,:)}\bmf_t- E_s\!\abs{\tilde{\mu}_K^{(i)}}^2.
\end{align*}
As $(\tilde{\bA}_{\cdot\mid K}^{-1})^{(i,i)}$ is diagonal, we can decompose the computation as the sum of per-subcarrier operations. To this end, let $\tilde\bma^{H}_{i,w|K}$ be the $i^\text{th}$ row of $\widetilde{\bA}_{w|K}^{-1}$, then:
\begin{align} \label{eq:exactNPIvariance}
\tilde{\nu}_{i|K}^2  &=E_s\sum_{w=1}^{L}\tilde\bma^{H}_{i,w|K}\bA_{w}\bG_{w}\tilde\bma_{i,w|K}
- E_s\!\abs{\tilde{\mu}_K^{(i)}}^2.
\end{align}
This expression, however, is computational intensive, as it involves the $L$ matrix multiplications, each requiring $O(\MT^3)$ operations. 
In order to reduce the complexity of computing $\tilde{\nu}_{i\mid K}^2$, we can use the $K=1$ term approximation NPI 
\begin{align} \label{eq:firstnpiapproximation}
\tilde{\nu}^2_{i|1}=E_s\sum_{w=1}^{L}(d^{(i,i)}_w)^{-2}\bma_{i,w}^H\bmg_{i,w} 
- E_s\!\abs{\tilde{\mu}_1^{(i)}}^2
\end{align}
as a substitute for $\tilde{v}^2_{i\mid K}$. Here,  $d^{(i,i)}_w$ is the $i^\text{th}$ diagonal entry of $\bD_w$, $\bma^{H}_{i,w}$ is the $i^\text{th}$ row of ${\bA}_w$, and $\bmg_{i,w}$ is the $i^\text{th}$ column of ${\bG}_w$.
%
This approximation requires low computational complexity as it involves only $L$ inner products, each requiring~$\MT$ operations. In addition, the larger $K$ is, the closer the approximate inversion in~(\ref{eq:approximation}) \revision{is} to the exact inverse (assuming the Neumann series converges). Hence, for $K>1$, the exact NPI variance would be lower than $\tilde{\nu}_{i\mid K}^2$, which reveals that~\eqref{eq:firstnpiapproximation} is a pessimistic approximation. 

We emphasize that we can further reduce the computational complexity of the NPI approximation in \eqref{eq:firstnpiapproximation}. In particular, let $a_w^{(i,j)}$ be the $i^\text{th}$ entry of the vector $\bma_{j,w}$ and $g_w^{(i,j)}$ be the $i^\text{th}$ entry of the vector $\bmg_{j,w}$. Since  $\bA_w=\bG_w+{N_0}{\Es^{\!-1}}\bI_{\MT}$, we have the following identity:
\begin{align*}
 (d^{(i,i)}_w)^{-2}\bma_{i,w}^H\bmg_{i,w} = (d^{(i,i)}_w)^{-2} a_w^{(i,i)}g_w^{(i,i)} +\!\! \sum_{j,i \neq j}(a_w^{(i,j)})^H g_w^{(i,j)}.
\end{align*}
Since (i) $a_w^{i,j} = g_w^{i,j}$, $\forall i \neq j$, (ii) $d^{(i,i)}_w=a^{(i,i)}_w$, and (iii) $d^{(i,i)}_w \gg a^{(i,j)}_w$ in the case where $\MT\ll\MR$, we can use the approximation $(d^{(i,i)}_w)^{-2}\bma_{i,w}^H\bmg_{i,w} \approx (d^{(i,i)}_w)^{-1} g_w^{(i,i)}$. Hence, we propose the following low-complexity NPI approximation:
\begin{align} \label{eq:npiapproximation}
\tilde{\nu}^2_{i} \approx E_s\sum_{w=1}^{L}(d^{(i,i)}_w)^{-1} g_w^{(i,i)} - E_s\!\abs{\tilde{\mu}_1^{(i)}}^2.
\end{align}
Note that our own simulations show that the low-complexity NPI approximation \eqref{eq:npiapproximation} performs well compared to the exact NPI variance \eqref{eq:exactNPIvariance}. For example, the performance loss caused by the approximation compared to \revision{the} exact NPI computation for $\MT=4$, $\MR=8$, and $K=3$ is less than $0.02$\,dB at \revision{a} BLER of $10^{-2}$ (cf.~Section~\ref{sec:errorrateperformance} for the simulation settings). 

\subsection{Simulation Results}\label{sec:sims}

We next demonstrate the advantages and limitations of the proposed approximate matrix inversion approach in terms of computational complexity and error-rate performance. \revision{To assess the error-rate performance for practically relevant antenna configurations, we note that the Samsung Full-Dimensional MIMO prototype~\cite{SamsungFDMIMO} consists of 64 BS antennas, whereas the massive MIMO research platform developed at Rice University~\cite{ArgosV2} currently consists of 96 BS antennas (with plans for larger array sizes). Hence, we focus our results on the following cases: $\MR=64$, $\MR=128$, and $\MR=256$.}
\subsubsection{Computational complexity}

\sloppy

To demonstrate that the proposed approximate inverse exhibits (often significantly) lower complexity than an exact inverse, we chose a Cholesky decomposition-based inverse as a reference (see Section~\ref{sec:referencecholesky} for algorithm details), as this method exhibits lower complexity compared to other inversion algorithms, including (but not limited to) direct matrix inversion, QR decomposition, or LU factorization \cite{GV96,ABurgThesis}.
The computational complexity \revision{(characterized by the sum of real-valued division\footnote{\revision{The number of divisions is not significant for the total operation count.}},  addition, and multiplication\footnote{\revision{To obtain the real-valued multiplication count,  we assumed four real-valued multiplications per one complex-valued multiplication. One could further reduce the number of the real-valued multiplications by using strength-reduction; this approach, however, maintains the trends observed in Fig.~\ref{fig:complexity}.}} operations)} of an exact Cholesky-based inverse scales with $O(\MT^3)$, whereas the complexity of a $K=1$ and $K=2$ Neumann series expansion scales only with $O(\MT)$ and $O(\MT^2)$, respectively.
\revision{The computational complexity of $K\geq3$ is dominated by matrix-by-matrix multiplications, where the number of such operations grows linearly with $K$. For example, $K=3$ requires one matrix-by-matrix multiplication, whereas $K=4$  requires two. In general, a $K\geq3$ term approximation requires $K-2$ matrix-by-matrix multiplications. As a result, the complexity of a $K\geq3$ term approximation is $O((K-2)U^3)$. 
Hence, we have $O(\MT^3)$ for $K=3$, which is equivalent to that of an exact Cholesky-based inverse.  Consequently,  a Neumann series approximation with $K\geq3$ does not appear to be advantageous. }

\fussy

\begin{figure}[t]
\centering
\includegraphics[width=0.95\columnwidth]{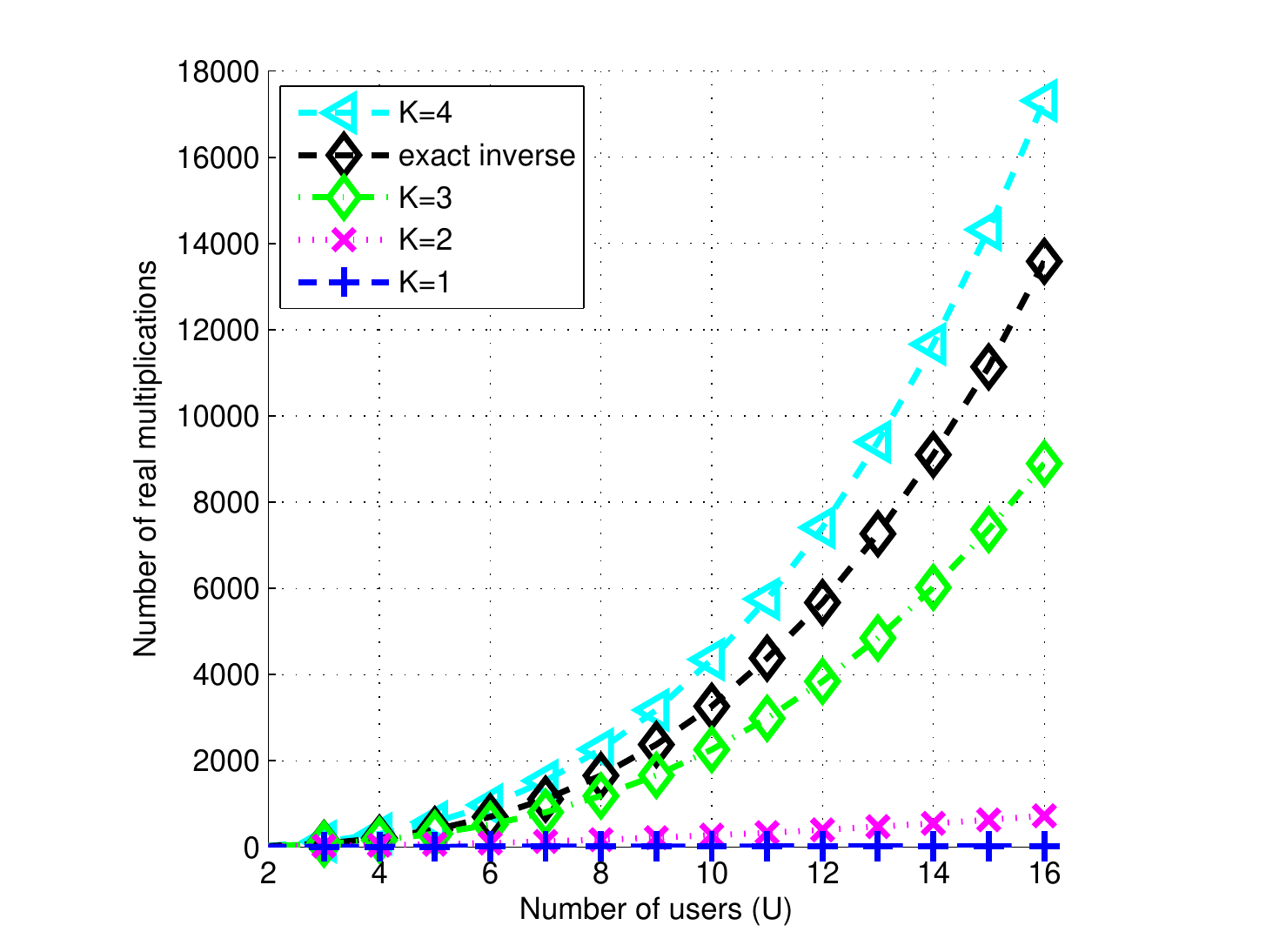}
\caption{Number of real-valued multiplications depending on the number of users~$\MT$. The proposed approximation with $K\leq3$ requires substantially lower complexity than that of an exact inverse based on the Cholesky decomposition.} \label{fig:complexity}
\end{figure}

\revision{The overall operation counts of both methods are dominated by the number of real-valued multiplications and additions, where real-valued multiplication is more expensive than real-valued addition. Since asymptotic complexity scalings do not, in general, reveal the full truth, we count the number of real-valued multiplications of both methods in  \figref{fig:complexity} for varying numbers of users $\MT$.}
\revision{We observe that for $K\leq3$, the Neumann series approach results in substantially lower complexity than the exact inversion approach. As expected, $K\geq4$ results in higher complexity than a Cholesky-based exact inversion.}

\subsubsection{Error-rate performance}
\label{sec:errorrateperformance}
Evidently, the reduction in complexity for $K\leq3$ Neumann series terms comes at the cost of an approximation error (cf.~Section~\ref{sec:approxerror}). 
To characterize the associated performance loss, we now compare the error-rate performance of the \revision{proposed} approximate matrix inverse \revision{with the error-rate performance of the exact inversion} for an LTE-based large-scale MIMO uplink system. To this end, we show simulation results of an SC-FDMA LTE uplink system with $\MR$ antennas at the BS and $\MT\leq \MR$ single-antenna users. 
In particular, we study a challenging communication scenario (from an error-rate perspective) and focus on the MCS (modulation and coding scheme) of \revision{the} highest rate (i.e., MCS 28) and  $20$~MHz bandwidth with $1200$ subcarriers,  as specified by the LTE standard~\cite{3GPPLTE}; this mode corresponds to $64$-QAM, and a rate~$\approx0.75$ 3GPP LTE turbo code.
In order to generate channel matrices that reflect a potential\footnote{To the best of our knowledge, no \revision{specific} channel model for large-scale MIMO systems is available in the open literature.} real-world scenario, we use the WINNER-\mbox{Phase-2} model~\cite{winner2}. In addition, we assume a linear antenna array with an antenna spacing of $10/128\approx0.0781$\,m, which \revision{resembles that of} the real-world channel measurement campaign in~\cite{HoydisChannel2012}. 
At the BS, we use the exact and approximate soft-output MMSE detectors detailed above. \revision{Furthermore, we} use a log-MAP LTE turbo decoder that performs~$16$ (full-)iterations.
\revision{Further, we define signal-to-noise-ratio (SNR) as $\MR E_s/N_0$, which corresponds to the average SNR per receive antenna.}

\begin{figure}[tp]
\centering
\subfigure[]{\includegraphics[width=0.9\columnwidth]{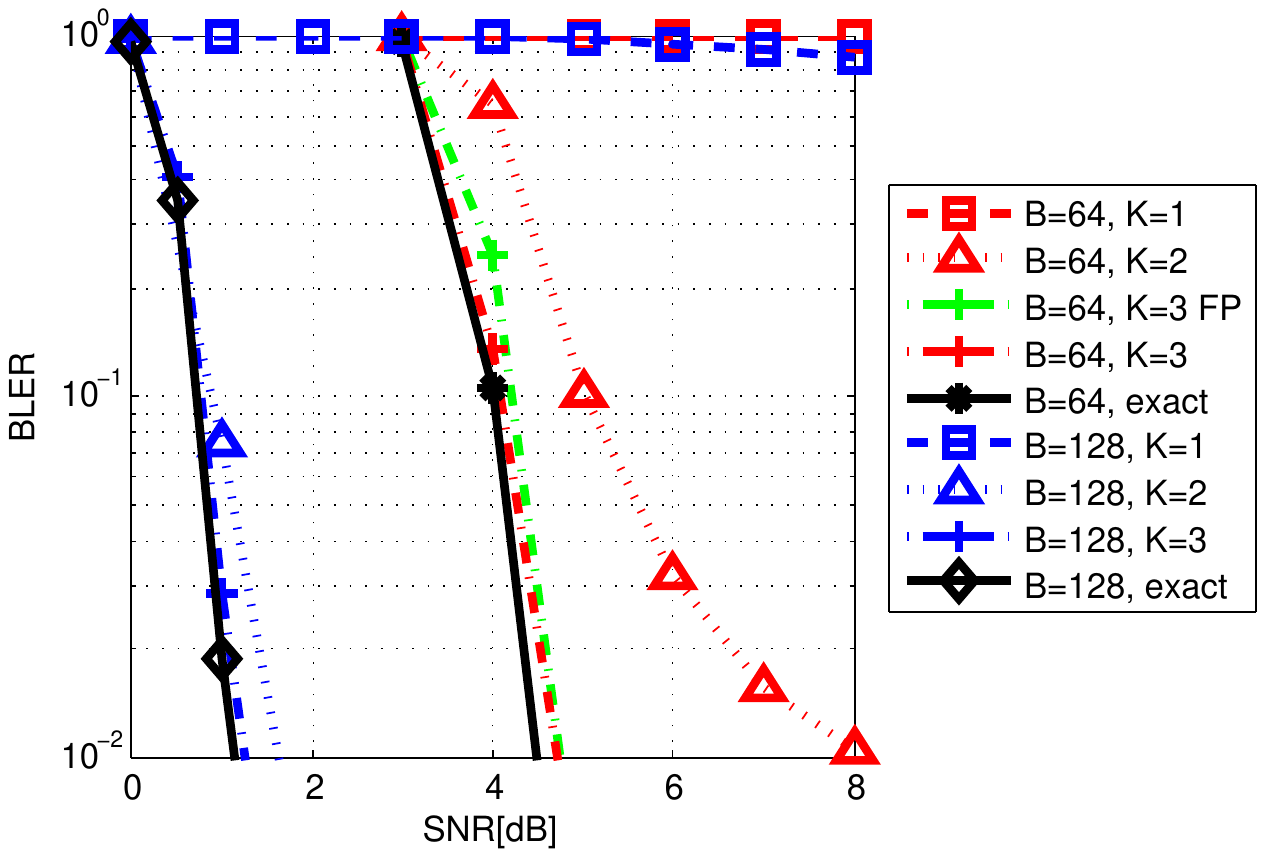}\label{fig:4_128_BLER}}
\subfigure[]{\includegraphics[width=0.9\columnwidth]{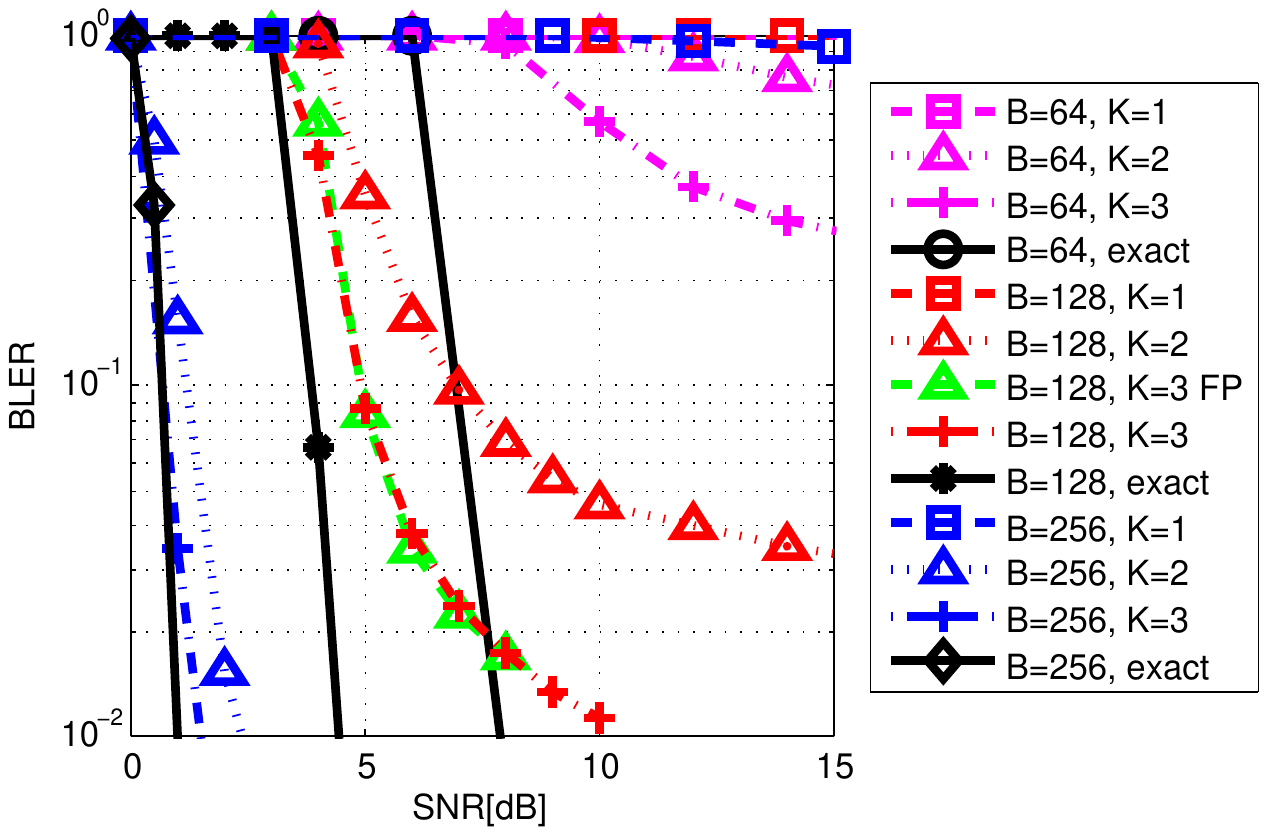}\label{fig:8_256_BLER}}
\subfigure[]{\includegraphics[width=0.9\columnwidth]{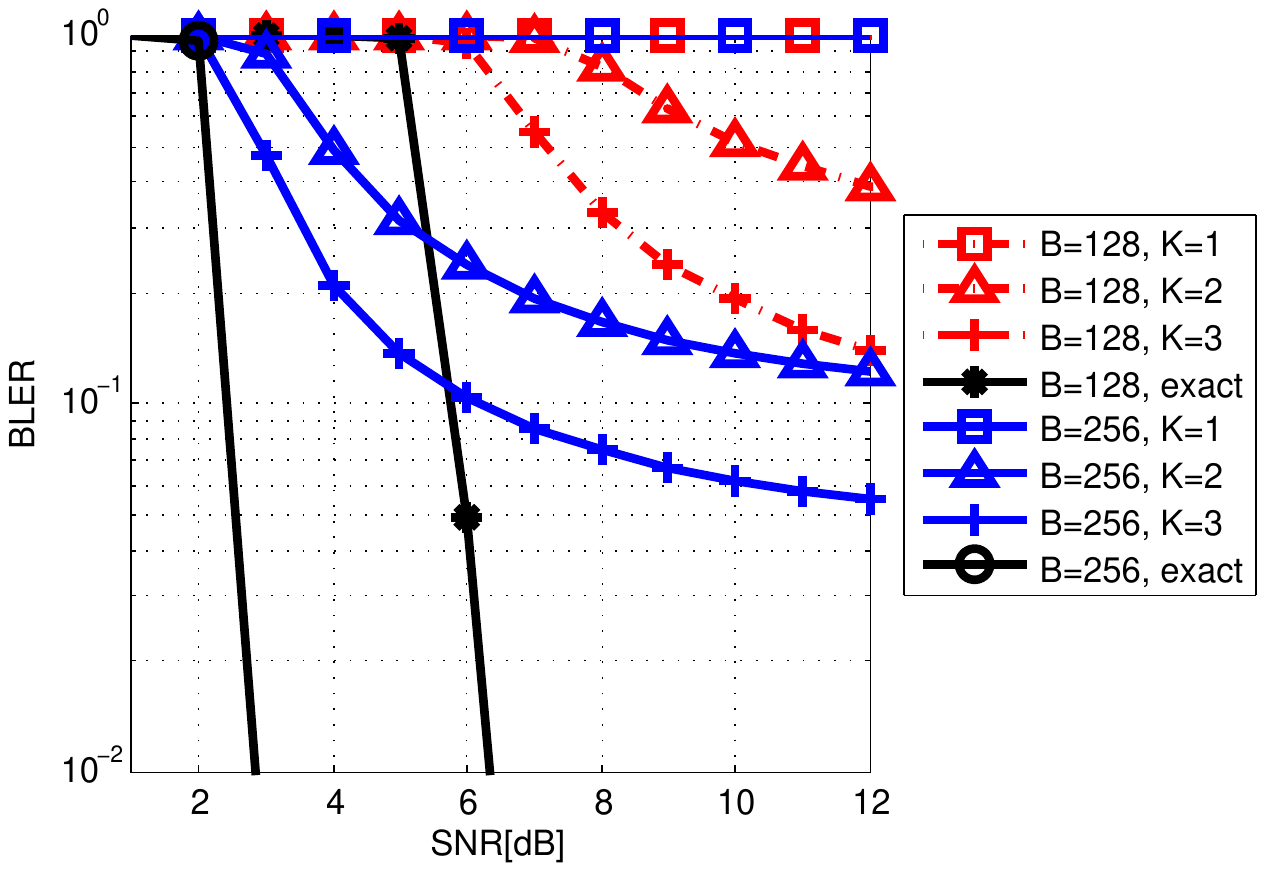}\label{fig:12_256_BLER}}
\caption{\revision{Block error-rate (BLER) performance comparison for (a) $\MT=4$ (b) $\MT=8$,  and (c) $\MT=12$ single-antenna users where $M=64$ and $\text{MCS}=28$; `FP' designates the performance of a fixed-point implementation.}}
\label{fig:figure} 
\end{figure}

\revision{Figures~\ref{fig:4_128_BLER}, \ref{fig:8_256_BLER}, and~\ref{fig:12_256_BLER}} show the block-error rate (BLER) performance of the proposed approximate detection algorithm compared to that of an exact MMSE detector for $\MT=4$, $\MT=8$ and $\MT=12$, respectively.  

\revision{We see that for small ratios between BS antennas and users, the MF detector (equivalent to $K=1$) and the Neumann series approximation for $K=2$ result in large residual errors.\footnote{Compared to lower modulation orders, such as $16$-QAM (not shown here), $64$-QAM requires a relatively high SNR to perform well.}}
Hence, considering the $10$\% BLER requirement for LTE~\cite{3GPPLTE}, the MF detector and $K=2$ term approximation \revision{are not suitable in practice} in the considered  \revision{$64$-QAM} cases (note that this fact is also reflected by Theorem~\ref{theorem}). 
{\revision{For a larger number of BS antennas, this error floor can be recovered partially. Our own simulations have shown that the MF detector achieves $<10^{-2}$ BLER for  $\MT=4$ and $\MR=512$. Furthermore, for $16$-QAM, our approximation method requires smaller values of~$K$ (see~\cite{Wu2012,Yin2013} for corresponding 16-QAM simulations in a large-scale MIMO-OFDM setting).}
}

\revision{We see that for $64$-QAM, the proposed approximate inversion method with $K=3$ terms is able to approach  the performance of the exact detector, i.e., the BLER performance loss is less than $0.25$\,dB SNR at $10^{-2}$ BLER in all of the $K=3, \MT=4$ cases and the $K=3, \MT=8, \MR=256$ case. Hence, the proposed approximate inverse for $K=3$ can deliver the performance of an exact inversion at (often substantially) lower complexity for large ratios between BS antennas and users. 
For small antenna ratios, however, the approximate inverse with $K=3$ exhibits an error floor. 
}

\revision{We conclude that systems with small ratios between BS and user antennas will need to resort to an exact inverse, while systems with large ratios can take advantage of the proposed approximate inverse. Hence, we next propose corresponding MIMO detection architectures for both, the approximate inverse and an exact Cholesky-based inverse.}


\section{VLSI Architecture}
\label{sec:architecture}

We now detail two VLSI architectures suitable for large-scale MIMO detection in 3GPP LTE-A. \revision{The first design implements the proposed approximate inversion approach and the second design implements an exact inverse; this enables us to perform a fair hardware complexity vs.\ error rate performance  comparison (see~Section~\ref{sec:implementation} for the comparison).}

\subsection{Architecture Overview}
\begin{figure*}[tp]
\centering
\includegraphics[width=1.7\columnwidth]{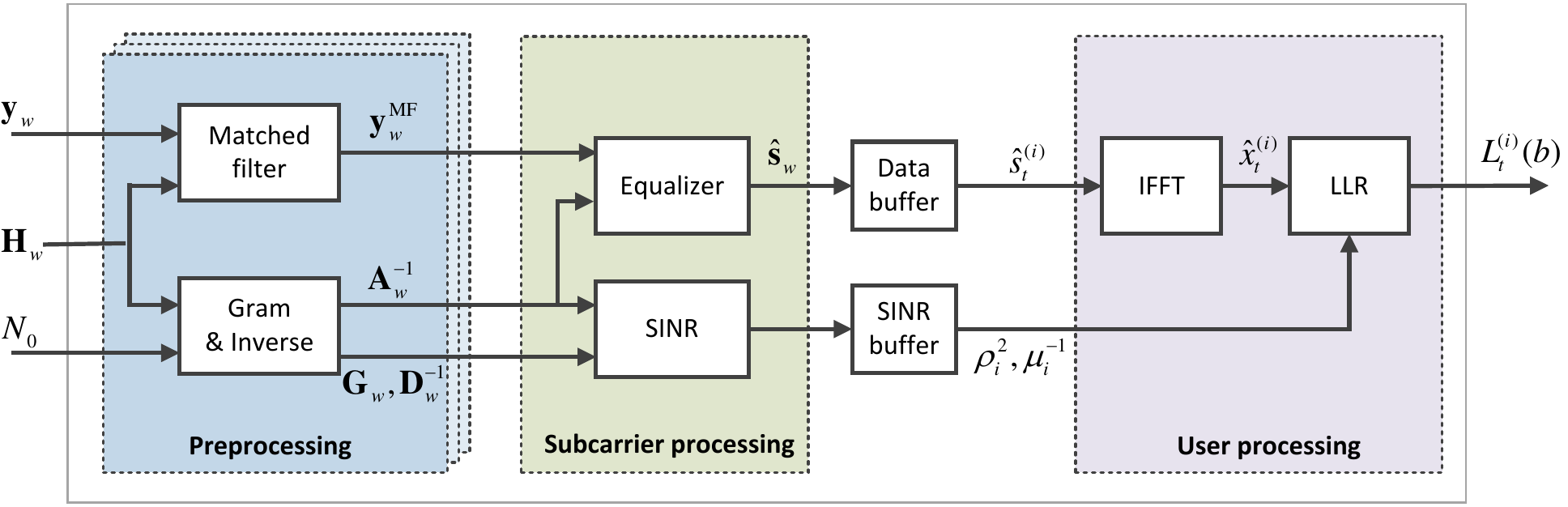}
\caption{High-level VLSI architecture of the large-scale MIMO detection engine for 3GPP LTE-A.}\label{fig:System}
\end{figure*}
\sloppy

The proposed general architecture is depicted in~\figref{fig:System} and consists of the following parts. The preprocessing unit performs matched filter computation, i.e., computes \mbox{$\bmy^\text{MF}_w=\bH_w^H\bmy_w$}, the regularized Gram matrix, and the (approximate) inverse. 
Note that for the approximate inversion unit, we also output $\bD_w^{-1}$ and $\bG_w$, which are needed to compute the SINR (cf.~Section~\ref{sec:NPIwithapproximation}). 
\revision{To achieve the peak throughput specified in LTE-A~\cite{3GPPLTEA}, while being able to handle the (worst) case where the channel estimates change from subcarrier to subcarrier and from SC-FDMA symbol to SC-FDMA symbol~(see, e.g,~\cite{Simko2011}), we use multiple instances of the preprocessing unit.\footnote{\revision{In many practical scenarios, the channel estimates may change only slowly. Hence, one does not need to  compute the inverse  for every SC-FDMA symbol. This fact could be either exploited to reduce the power consumption or to increase the achievable throughput of our detector designs.}}}
The matched filter output, the (approximate) inverse, and the regularized Gram matrix, 
are then passed to the subcarrier processing unit. 
This unit performs equalization, i.e., computes $\hat{\bms}_w=\bA^{-1}_w\bmy^\text{MF}_w$ and the post-equalization SINR (detailed in Section~\ref{subsubsection:exactllr} for the exact inverse and in  Section~\ref{sec:NPIwithapproximation} for the Neumann series approximation). 
\revision{To perform per-user data detection, a buffer is required that aggregates all equalized symbols and SINR values, which are computed on a per-subcarrier basis.}
The architecture then performs an IFFT, which transforms the equalized symbols from the subcarrier domain into the user domain (or time domain).
\revision{The LLR computation unit finally computes, together with the buffered post-equalization NPI values, soft-output information in the form of max-log LLRs~\eqref{eq:maxlogllr}.}
\revision{We next provide the details for the key blocks of the proposed detector architecture.}

\fussy

\subsection{Approximate Inversion and Matched Filter Units}
\label{sec:approxinvunit}

\subsubsection{Approximate inverse computation}

\begin{figure*}[tp]
\centering
\includegraphics[width=1.55\columnwidth]{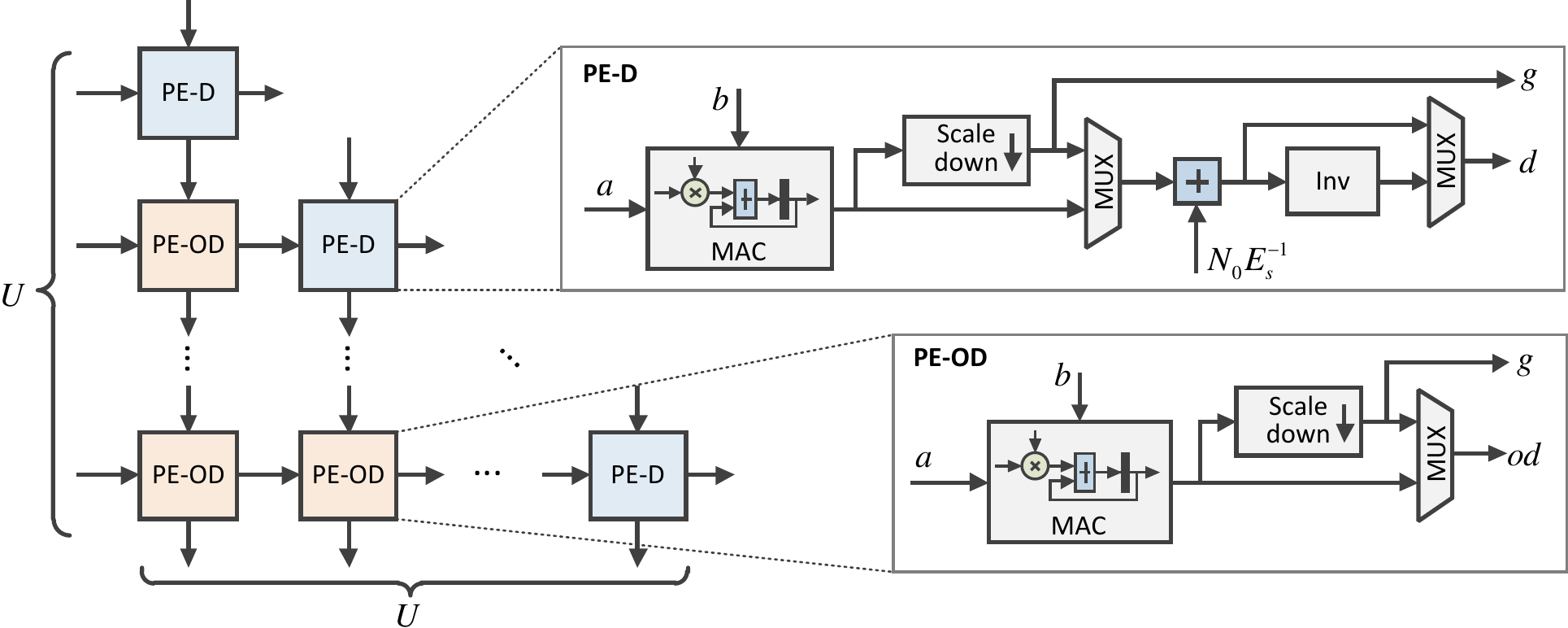}
\caption{Architecture details of the Gram matrix computation and approximate matrix inversion unit. The lower-triangular systolic array shown on the left consists of two processing elements (PEs); their architectural details are shown on the right.}\label{fig:sysarray}
\end{figure*}

 In order to achieve high throughput, we propose a \emph{single} systolic array that computes both, the regularized Gram matrix and the approximate inverse in four phases. The proposed architecture is detailed in~\figref{fig:sysarray} and \revision{is} capable of computing  inverses for various $K$-term expansions, i.e., the number of Neumann series terms can be selected at run-time. 
As shown in~\figref{fig:sysarray}, the lower triangular systolic array consists of two distinct processing elements (PEs): (i) PEs on the main diagonal of the systolic array (referred to as PE-D) and PEs on the off-diagonal (referred to as PE-OD).
As detailed next, both PEs have different modes in the four computation phases.
%

%
In the first phase,  the $\MT\times\MT$ normalized regularized Gram matrix $\bA_w/\MR=(\bG_w+N_0 E_s^{-1}\bI_U)/\MR$ is computed in $\MR$ clock cycles. Since $\bA_w$ is diagonally dominant with diagonal entries close to $\MR$, i.e., the number of BS antennas, we reduce its dynamic range by computing a normalized version, whose entries on the main diagonal are close to $1$ by  the `\textsf{scale down}'  unit shown in~\figref{fig:sysarray}; this trick mitigates dynamic-range issues, which are common for matrix inversion circuits implemented with fixed-point arithmetic.
The systolic array also computes  $\bD_w^{-1}\MR$ from the diagonal entries of $\bA_w/\MR$. These entries are computed in reciprocal units (denoted by `\textsf{inv}' in~\figref{fig:sysarray}) residing in the PE-D units.
The results~$\bD_w^{-1}\MR$ and $\bE_w/\MR$ are then stored in register files distributed in the systolic array. 

In the second phase, the  systolic array computes $-\bD_w^{-1}\bE_w$, by using the matrices $\bD_w^{-1}\MR$ and $\bE_w/\MR$ computed in the first phase. 
Since the matrix $-\bD_w^{-1}\bE_w$ is not Hermitian, the systolic array computes the upper- and lower-triangular parts of  $-\bD_w^{-1}\bE_w$ separately.
As  $\bD_w^{-1}$ is a diagonal matrix, computation of $-\bD_w^{-1}\bE_w$ only requires a series of scalar multiplications (rather than a matrix multiplication)\revision{.}

\sloppy

In the third phase, the systolic array computes the \mbox{$K=2$} term Neumann series approximation, i.e., $\widetilde{\bA}^{-1}_{w\mid 2}\MR=(\bD_w^{-1}\MR-\bD_w^{-1}\bE_w\bD_w^{-1}\MR)$. 
To this end, it is important to realize that the matrix $\bD_w^{-1}\MR-\bD_w^{-1}\bE_w\bD_w^{-1}\MR$ is Hermitian, implying that only the lower triangular part needs to be computed. Furthermore, since $\bD_w^{-1}\MR$ is diagonal, computation of   $-\bD_w^{-1}\bE_w\bD_w^{-1}\MR$ only requires entry-wise multiplications (instead of costly matrix multiplications). These scalar multiplications are carried out by loading $\bD_w^{-1}\MR$ and $-\bE_w\bD_w^{-1}$ into all PEs and performing a scalar multiplication to compute $\bD_w^{-1}\bE_w\bD_w^{-1}\MR$. Then, we add $\bD_w^{-1}\MR$ to the result in the diagonal PEs.
The result of this phase, i.e., $\bD_w^{-1}\MR-\bD_w^{-1}\bE_w\bD_w^{-1}\MR$, is stored in the distributed register files.

\fussy

In the fourth phase, the $K$-term Neumann series approximation is computed with the results residing in the distributed register files. In particular, the systolic array first performs a matrix multiplication of $-\bD_w^{-1}\bE_w$ with $\widetilde{\bA}^{-1}_{w\mid K-1}\MR$, and then adds $\bD_w^{-1}\MR$  to the diagonal PE. The resulting $K$-term approximation $\widetilde{\bA}^{-1}_{w\mid K}\MR$ is then stored in the register files.
This phase can be repeated for a configurable number of iterations, \revision{which allows us to compute an arbitrary $K$-term approximation.} 

\subsubsection{Matched filter computation}
The matched filter (MF) unit consists of a linear array of $\MT$ PEs. Each PE is associated \revision{with} one row of the Hermitian matrix $\bH_w^H$, and contains a single multiply accumulate unit (MAC) and a scaling unit to normalize the result to $\bmy^\text{MF}_w/\MR$. The MF unit reads a new entry of~$\bmy_w$ every clock cycle, and multiplies it with the corresponding entries in $\bH_w^H$ in each PE and \revision{then,} adds it the  previous results; the \revision{final} result is then normalized by $1/\MR$. 

\subsection{Equalization and SINR Computation Units}

\subsubsection{Equalization unit}
The equalization unit consists of a linear array of $\MT$ MAC units, and reads the normalized approximate inverse $\widetilde{\bA}^{-1}_{w\mid K}\MR$ and the $\bmy^\text{MF}_w/\MR$ from the matched filter unit. For each clock cycle, this unit  takes one column of $\widetilde{\bA}^{-1}_{w\mid K}\MR$, multiplies it with one element from  $\bmy^\text{MF}_w/\MR$,  \revision{and adds the scaled column to the previous results. The unit outputs an equalized symbol $\hat{\bms}_w$ every $\MT$ clock cycles.} 

\subsubsection{SINR computation unit}
The SINR computation unit simply consists of $\MT$ MAC units that sequentially \revision{compute} the approximate effective channel gain~$\tilde{\mu}^{(i)}_K$. This unit furthermore computes the approximate NPI \eqref{eq:npiapproximation} using a single MAC unit.
Subsequently, the unit multiplies $\tilde{\mu}^{(i)}_K$ with \revision{the} reciprocal of the approximate NPI $\tilde{\nu}^2_{i}$ to obtain the post-equalization SINR $\rho^{2}_i$. The same unit computes the reciprocal of  $\tilde{\mu}^{(i)}_K$ which is used in the LLR computation unit detailed next.

\subsection{IFFT and LLR Computation Units}

\subsubsection{IFFT unit}

In order to transform the per-subcarrier data into the user (or time) domain, we deploy a single Xilinx Discrete Fourier Transform IP LogiCORE unit (see \cite{DFTcorexilinx} for the specifications). This unit supports all forward and inverse DFT modes specified in 3GPP LTE~\cite{3GPPLTE}, but we only make use of its IDFT capabilities. The IFFT unit reads and outputs data in \revision{a} serial manner. For an IFFT transform size of $1200$ subcarriers, the core can process a new set of data every $3779$ clock cycles. This FFT unit \revision{achieves more than $317$\,MHz on a \revision{Virtex-7 XC7VX980T} FPGA and hence, achieves a throughput beyond $600$\,Mb/s for $8$ users, $64$-QAM, and 20MHz bandwidth}.

\subsubsection{LLR computation unit}

\sloppy

The LLR computation unit (LCU) generates max-log soft output values given the effective channel gains $\mu^{(i)}$ from the IFFT block and the post-equalization SINR values $\rho^{2}_i$ obtained from the SINR block. 
Since LTE specifies Gray mappings for all modulation schemes (BPSK, QPSK, 16-QAM, and 64-QAM), one can simplify the computation of the max-log LLR values in~\eqref{eq:maxlogllr} by rewriting \mbox{$L^{(i)}_{t}(b) = {\rho^{2}_i}\lambda_b(\hat{x}^{(i)}_t)$} and realizing that $\lambda_b(\cdot)$ is a piecewise linear function that depends on the bit index (see~\cite{Studer2011} for the details). 
To this end, the LCU first scales the real and imaginary \revision{parts} of the equalized time-domain symbol with the reciprocal of the effective channel gain $1/\mu^{(i)}$. Then, it evaluates the piecewise linear function $\lambda_b(\hat{x}^{(i)}_t)$ and scales the result with the post-equalization SINR $\rho^{2}_i$. The resulting max-log LLR value is then delivered to the output of the unit.
In order to minimize the circuit area, the proposed architecture evaluates each piecewise linear function with \revision{logical} shifts and additions only. The reciprocals are computed with a lookup table that is stored in B-RAM units (see {\cite{Wu2012} for architectural details). A single instance of the resulting LCU is able to processes one symbol every clock cycle, resulting in a peak throughput of $1.89$~Gb/s for 64-QAM at $317$~MHz.

\fussy

\subsection{Reference Cholesky-based Inversion Unit}
\label{sec:referencecholesky}

In order to enable a fair performance/complexity assessment of the proposed approximate matrix inversion unit, we also implemented a reference unit that performs an exact matrix inversion. This unit simply replaces the approximate inverse unit detailed in Section~\ref{sec:approxinvunit}.
We next summarize the used Cholesky-based inversion algorithm and then, outline the corresponding VLSI architecture. 

\subsubsection{Inversion algorithm}
\label{sec:efficientinversion}
In \revision{the} proposed exact inversion unit, we compute $\bA^{-1}_w$ in three steps: (i) we form the regularized Gram matrix $\bA_w=\bG_w+N_0 E_s^{-1}\bI_U$; (ii) we perform a Cholesky decomposition according to $\bA_w=\bL_w\bL_w^H$, where~$\bL_w$ is a lower-triangular matrix with real-values on the main diagonal \cite{GV96}; (iii) we compute the inverse $\bA_w^{-1}$ using an efficient forward/backward substitution procedure \revision{proposed} in~\cite{Studer2011}. Specifically, we first solve $\bL_w\bmu_i=\bme_i$ for $\bmu_i$, $i=1,\ldots,\MT$, where $\bme_i$ is the $i^\text{th}$ unit vector, via forward substitution. We then solve $\bL^H_w\bmv_i=\bmu_i$ for $\bmv_i$, $i=1,\ldots,\MT$, via back substitution, which leads to the desired inverse $\bA^{-1}_w=[\,\bmv_1 \cdots \bmv_{\MT}\,]$. Note that this approach avoids a costly matrix-by-matrix multiplication, which would be needed by directly computing \mbox{$\bA^{-1}_w=(\bL_w^{H})^{-1}\bL_w^{-1}$}.

\subsubsection{Cholesky decomposition architecture}
The VLSI architecture for the Cholesky-based inverse differs from the one in Section~\ref{sec:approxinvunit}. In particular, we deploy three separate  units that compute (i) the regularized Gram matrix, (ii) the exact inverse using the above algorithm, and (iii) a forward/backward substitution unit to compute the inverse $\bA_w^{-1}$. All units are detailed next and separated by pipeline stages.

The regularized Gram matrix is computed as a sum of outer products, i.e., as $\bG_w=\sum_{i=1}^{\MR}\bmr_i\bmr^H_i$, where $\bmr_i$ designates the $i^\text{th}$ row of $\bH_w$. Since the Gram matrix is symmetric, it can be computed efficiently with a triangular systolic array of multiply and accumulate units~(MACs), similar to the array detailed  in Section~\ref{sec:approxinvunit}.
The Gram computation unit reads one row of $\bH_w$ at a time and \revision{is able to}  output a Gram matrix every $\MR^\text{th}$ clock cycle. To obtain the regularized Gram matrix~$\bA_w$, we add $N_0 E_s^{-1}$ to the diagonal of $\bG_w$ in the final clock cycle.

We then perform the Cholesky decomposition of $\bA_w$ with a lower-triangular systolic array to obtain the lower-triangular matrix~$\bL_w$. 
The systolic array consists of two distinct processing elements (PEs): (i)  the PEs on the main diagonal and (ii) the PEs on the off-diagonal. 
The data flow is similar to the linear systolic array~(the ``obvious case'') proposed in~\cite{schreiber1986systolic}. The difference is that our design processes an incoming column of $\bA_w$ with multiple PEs, whereas an incoming column is processed with a single PE in \cite{schreiber1986systolic}. \revision{As a result, our design is able to achieve the peak throughput requirements of LTE-A.}
In our design, the pipeline of one column of PEs is $16$ stages deep and streams out one column of $\bL_w$ every clock cycle (after a latency of $16(\MT-1)$ clock cycles). Consequently, the achieved throughput corresponds to one Cholesky decomposition every~$\MT$ clock cycles.

\subsubsection{Forward/backward-substitution architecture}

\sloppy

The forward\slash{}backward substitution unit (FBSU) receives a lower-triangular
matrix $\mathbf{L}_{w}$ as input, and computes $\mathbf{A}_{w}^{-1}=(\mathbf{L}_{w}^{H})^{-1}\mathbf{L}_{w}^{-1}$ as outlined in Section~\ref{sec:efficientinversion}.
The FBSU consists of three major components: (i) a forward substitution
unit (FSU), which solves for $\mathbf{L}_{w}\bmu_{i}=\mathbf{e}_{i}$,
(ii) a backward substitution unit (BSU), which solves for $\mathbf{L}_{w}^{H}\bmv_{i}=\bmu_{i}$, 
and (iii) a Hermitian transpose unit, which computes $\mathbf{L}_{w}^{H}$. 
Since the computations for the FSU and the BSU are symmetric,
we implement \revision{the} forward substitution architecture and re-use it for \revision{the} backward
substitution, by reversing the order of \revision{the} columns of the matrix $\mathbf{L}_{w}^{H}$
and vector $\bmu_i$ before reading them into the BSU. 
To simplify notation, we assume that the equation to be solved
by the forward substitution corresponds to $\mathbf{Lx=b}$ for some $\mathbf{x}$ and $\mathbf{b}$. Since the forward
substitution of solving the equation $\mathbf{L}\mathbf{x}_{i}=\mathbf{b}_{i}$
for each $\mathbf{b}_{i}$ $(i=1,\ldots,U)$ is independent, we use~$U$ processor elements (PEs) to solve for all $\mathbf{x}_{i}$ in parallel.
Each PE is implemented using a fully pipelined architecture, which
consists of $U$ stages of computation logic. Each stage contains
two multiplexers, a complex-valued multiplier, and a complex-valued subtraction.
In each stage, either $\Delta_{i}={b}_{i}-\sum_{j}{L}_{i,j}{x}_{j}$
or $\Delta_{i}/{L}_{i,i}$ is computed according \revision{to} the control signals. Therefore, for an input
matrix $\mathbf{L}_{w}$ of dimension $U$, the FSU uses $U^{2}$ complex-valued multipliers; the entire FBSU utilizes $2U^{2}$ complex-valued multipliers.
The matrix conjugate unit is implemented using multiplexers and $U$
FIFOs (realized by on-chip B-RAMs in the FPGA). The conjugate
matrix $\mathbf{L}_{w}^{H}$ is also reordered based on the pattern of
the input sequence of the BSU.

\fussy

\section{Implementation Results and Trade-offs}
\label{sec:implementation}

The approximate detection engine for 3GPP-LTE and the exact Cholesky-based detector have been implemented on a Xilinx Virtex-7 XC7VX980T FPGA. The fixed-point parameters, FPGA implementation results,  and the associated performance/complexity trade-offs are \revision{presented} next.

\subsection{Fixed-Point Design Parameters}

In order to minimize the hardware complexity, fixed-point arithmetic is used in the entire design. The associated fixed-point parameters were determined via extensive simulations. In the following, the word-lengths refer to the real or imaginary part of a complex-valued number. 

The channel matrices $\bH_w$, the receive-vectors $\bmy_w$, and the noise variance $N_0 E_s^{-1}$, are all quantized to  $15$\,bit.
The word-length of the output of the Gram matrix and inversion unit are also set to $15$\,bit; equivalently, the matched filter unit has $15$\,bit at the input and output. 
For both matrix inversion circuits, all multiplications have been mapped onto Xilinx DSP48 slices. 
In order to achieve sufficient precision at minimum implementation complexity, the MAC registers within the DSP48 units are set to $22$\,bit. 
The LUT in the reciprocal unit consists of $1024$ addresses with $12$\,bit outputs. Hence, it can be implemented efficiently using a single block-RAM (B-RAM) available on the FPGA. 
The equalizer module uses a $15$\,bit input and its output, which is stored in the data buffer, is quantized to $12$\,bit.
The buffer stores (complex-valued) data for $1200$ subcarriers and $\MT$ users. 
The SINR computation module has a $15$\,bit input and $12$\,bit output.
The input and output of the IFFT unit are $12$\,bit; the precision of the internal multipliers is set to $18$\,bit.
The inputs of the LLR computation are \revision{quantized} to $12$\,bit and the computed LLRs are represented by $8$\,bit. 

The resulting fixed-point performance is shown in \figref{fig:figure} (labeled by `FP') for $64\times4$ and $128\times8$ systems. As it can be seen, the fixed-point implementation is virtually indistinguishable from the floating-point golden model. In particular, the implementation loss is less than $0.05$\,dB SNR at 10\% BLER.

\subsection{FPGA Implementation Results}

\begin{table}
\begin{minipage}[c]{1\columnwidth} 
\centering
\caption{Implementation results on a Xilinx Virtex-7 XC7VX980T FPGA}
 \label{tbl:implresults}
\begin{tabular}{lcccc}
\toprule 
{Antenna configuration\footnote{$128\times8$ refers to $\MR=128$ BS antennas and $\MT=8$ single-antenna users.}} & \multicolumn{2}{c}{$128\times8$} & \multicolumn{2}{c}{\textbf{${64\times4}$}}\tabularnewline
{Inversion algorithm\footnote{$K=3$ designates the approximate inversion with $3$ Neumann series terms.}} & ${K=3}$ & {Cholesky} & ${K=3}$ & {Cholesky}\tabularnewline
\midrule 
{Clock frequency {[}MHz{]}} & 317 & 317 & 317 & 317\tabularnewline
{Throughput {[}Mb/s{]}} & 603 & 603 & 301 & 301\tabularnewline
\midrule 
\multirow{2}{*}{{LUT slices}} & 168\,125  & 208\,161  & 34\,631 & 78\,756 \tabularnewline
 & (28\%) & (34\%) &  (6\%) & (12.9\%)\tabularnewline
\midrule 
\multirow{2}{*}{{FF slices}} & 193\,451  & 213\,226 & 39\,492 & 39\,602\tabularnewline
 & (16\%) & (17.4\%) & (3.2\%) & (3.2\%)\tabularnewline
\midrule 
\multirow{2}{*}{{DSP48 units}} & 1\,059 & 1\,447 & 233 & 329\tabularnewline
 & (30\%) & (40.2\%) & (7\%) & (9.14\%)\tabularnewline
\midrule 
\multirow{2}{*}{{Block RAMs}} & 18 & 65 & 12 & 32\tabularnewline
 & (0.6\%) & (2.17\%) & (0.4\%) & (1.07\%)\tabularnewline
\bottomrule 
\end{tabular}
\end{minipage}
\end{table}

Table~\ref{tbl:implresults} summarizes the key (post-place-and-route) implementation results of the proposed approximate and exact soft-output data detector for LTE-based massive MIMO wireless systems.
We parameterized the architecture for $\MT$ and $\MR$ to explore the impact on the required FPGA resources and the  corresponding throughput. The implementation results for antenna configurations of $128\times8$ and $64\times4$ are detailed in Table~\ref{tbl:implresults}.
In order to support $75$\,Mb/s data rate for each LTE-A user in 20\,MHz bandwidth, we use multiple instances of the preprocessing unit. Specifically, we used $8$ and $5$ instances of approximate matrix inversion units for the $128\times8$ and $64\times4$  system, respectively. For the exact inverse, we used $6$ and $3$ regularized Gram matrix units for  the $128\times8$ and $64\times4$  system, respectively. In addition, we used one Cholesky decomposition unit and one forward and backward substitution unit for both cases to meet the data rate requirements.

As shown in Table~\ref{tbl:implresults},  all designs are capable of running at $317$\,MHz and the critical path is the routing between different blocks of the detector. For the $128\times8$ and $64\times4$ \revision{systems}, the proposed units can achieve $603$\,Mb/s and $301$\,Mb/s, respectively. For the $64\times4$ system, the design meets the $300$\,Mb/s peak data rate requirement specified in LTE-A with $4$ users and 20MHz bandwidth. In addition, our design can scale beyond LTE-A specifications, i.e., the proposed designs can support up to $8$ users and still achieve a $75$\,Mb/s per-user requirement.

In \revision{terms} used resources \revision{on the Virtex-7 XC7VX980T FPGA}, the approximate soft-output data detector is smaller than the Cholesky-based unit. There are notable saving in logic slices and DSP48 units. For $64\times4$, $K=3$ uses 56\% fewer LUT slices and 29\% fewer DSP48 units compared to that of the Cholesky-based unit. For $128\times8$, $K=3$ uses 19\% fewer LUT slices and 26\% fewer DSP48 units compared to that of  the Cholesky-based unit. We emphasize that the savings in hardware resources  become significantly larger as the number of users~$\MT$ increases.

\subsection{Performance/Complexity Trade-off}
\label{sec:tradeoffanaalysis}

Based on the simulated BLER results in~\figref{fig:figure} and the associated  FPGA implementation results, we are now ready to characterize the error-rate performance vs.~hardware complexity trade-offs associated with the detector containing the proposed approximate matrix inversion and the Cholesky-based exact inversion. 
To this end, we show  the associated hardware complexity against the minimum SNR required to achieve 10\% BLER in~\figref{fig:tradeoff}.
Since both designs are dominated by multipliers, we define the hardware complexity as the number of  multipliers  required to achieve a $75$\,Mb/s per-user throughput.

\begin{figure}[tp]
\centering
\includegraphics[width=0.75\columnwidth]{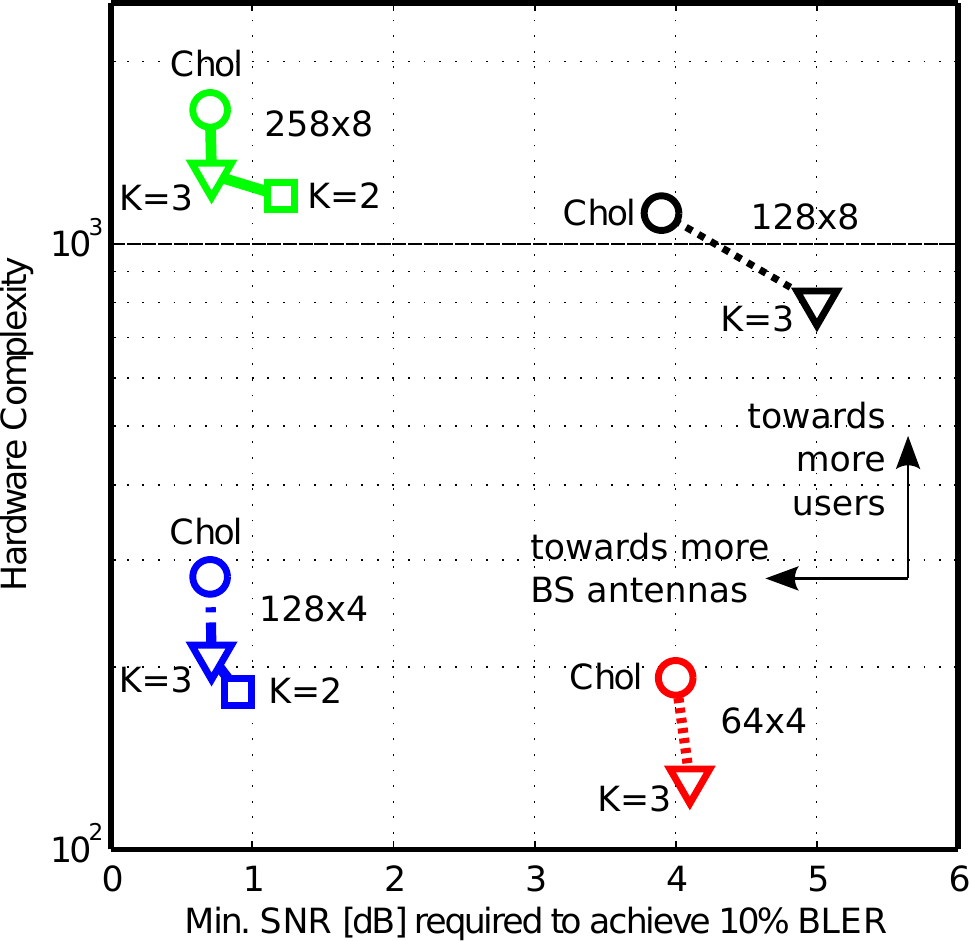}
\caption{Performance/complexity trade-off. Hardware complexity is defined as the number of DSP48E1 slices required to achieve the LTE-A uplink $75$\,Mb/s per-user peak throughput.}\label{fig:tradeoff}
\end{figure}

From~\figref{fig:tradeoff}, we \revision{observe} that the hardware complexity of the Cholesky-based detector is larger than that of the approximate inversion circuit for $K=3$ and $K=2$.
In addition, for large ratios between the number of BS antennas to the number of users $\MR/\MT$, we clearly see that the SNR performance of the approximate inverse with $K=3$ and the exact inverse are very similar.
For small ratios $\MR/\MT$, however, the performance difference between the approximate inverse and the exact inverse is rather large, which is reflected in the analysis shown Section~\ref{sec:approxerror}.
Hence, the ratio $\MR/\MT$ determines whether an approximate or exact inversion is beneficial in a practical large-scale MIMO system. 
Note that for $128\times 8$ and $64\times 4$, the approximate inverse with $K=2$ is unable to achieve 10\% BLER (cf.~\figref{fig:figure}). We \revision{note that when considering 16-QAM modulation (rather than 64-QAM modulation, as shown here), the approximate inversion for $K=2$ is capable of achieving similar performance \revision{as} the exact inverse (see~\cite{Wu2012,Yin2013} for corresponding simulation results).}

\subsection{Related FPGA Designs for Linear Data Detection}

A host of FPGA designs for linear data detection in conventional (small-scale) MIMO systems have been proposed in the literature~\cite{myllyla2005complexity, rao2010low, eberli2008divide,burg2006algorithm,WuDi2011,karkooti2005fpga,Cong2011,luethi2008gram}. Unfortunately, all these designs differ in various ways. First, the corresponding architectures rely on different matrix inversion algorithms, such as the QR decomposition~\cite{myllyla2005complexity, rao2010low, karkooti2005fpga, Cong2011}, Gram-Schmidt orthogonalization~\cite{kim2007efficient,luethi2008gram}, LU decomposition\cite{Studer2011}, direct matrix inversion~\cite{WuDi2011}, divide-and-conquer methods~\cite{eberli2008divide, Eilert2008}. Second, all FPGA \revision{implementations} do not generate soft outputs, with the exception of~\cite{wu2010vlsi}. 
Third, the designs were implemented on different FPGA types. 

Since the soft-output detector implementations proposed in this paper are for large-scale MIMO systems having hundreds of BS antennas and none of the small-scale MIMO detector designs in~\cite{myllyla2005complexity,rao2010low,burg2006algorithm,karkooti2005fpga,Cong2011,kim2007efficient,WuDi2011,eberli2008divide, Eilert2008} was implemented on a Xilinx Virtex-7 FPGA, a fair comparison of our design with the above-mentioned implementations is difficult. Hence, we decided \revision{to} resort to the comparison with our own reference circuit\revision{, i.e., the Cholesky-based inverse,} as shown in Section~\ref{sec:tradeoffanaalysis}.


\section{Conclusions}
\label{sec:conclusions}

\sloppy

We have proposed a new soft-output data detector for large-scale (or massive) MIMO-based 3GPP LTE-Advanced (\mbox{LTE-A}) systems. The proposed solution is capable of performing high throughput detection in single-carrier frequency division multiple access (SC-FDMA)-based large-scale MIMO systems equipped with hundreds of antennas at the base station~(BS). 
In order to achieve low computational complexity, we have proposed a new approximate linear detector relying on a Neumann series approximation of the matrix inverse. 
We have designed two reference VLSI architectures, one relying on the approximate inverse, the other on an exact Cholesky-based matrix inversion.
Both architectures have been successfully implemented on a state-of-the-art Xilinx \mbox{Virtex-7} FPGA, are suitable for systems equipped with $128$ BS antennas or fewer while serving up to $8$ users, and achieve more than $600$\,Mb/s, exceeding the peak data rates specified in the 3GPP LTE-A uplink for 20\,MHz bandwidth.
Our FPGA implementation results reveal that for systems with a large ratio between the number of BS antennas and the number of users, the approximate matrix inversion is able to significantly reduce the hardware implementation complexity (compared to that of the exact inversion) with only a slight error-rate performance degradation.
For systems with small ratios between the number of BS antennas and the number of users (as it is the case in, e.g., conventional, small-scale MIMO systems) one must resort to an exact inverse in order to avoid poor error-rate performance. This behavior is in accordance with the analytical results we have developed for the approximate matrix inverse. 
In summary, our FPGA implementation results demonstrate the practical feasibility of high-throughput data detection for 3GPP LTE-based large-scale MIMO systems. 
We finally note that a corresponding high-throughput ASIC design has recently been published in~\cite{Yin2014}.

\fussy

\revision{There are many avenues for future work. The development of detection algorithms that are able to perform iterative detection and decoding (as, e.g.,  in \cite{Studer2011}) in large-scale MIMO systems is left for future work. Furthermore, the design of high-performance, near-optimal detection methods (e.g., based on the algorithms in  \cite{HB03,SB10}) that require low computational complexity  for large-dimensional antenna configurations and for SC-FDMA is a challenging open research problem.}

\appendices


\section{Proof of Theorem \ref{theorem}}
\label{app:theoremproof}

To prove Theorem~\ref{theorem}, we need the following three Lemmata.
\begin{lem}
\label{lem1}
Let the scalars $x^{(k)}$ and $y^{(k)}$ for $k=1,\ldots,B$ be i.i.d.\ circularly symmetric complex Gaussian with unit variance. Then, $\mathsf{E}\!\left[\left|\sum_{k=1}^{\MR}x^{(k)}y^{(k)}\right|^4\right]=2\MR(\MR+1)$.
\end{lem}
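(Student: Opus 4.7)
The plan is to evaluate $\mathsf{E}\!\left[|Z|^{4}\right]$ with $Z=\sum_{k=1}^{\MR} x^{(k)} y^{(k)}$ by a two-step conditioning argument. First, I would condition on the vector $\bmy=\big(y^{(1)},\ldots,y^{(\MR)}\big)^{T}$. Conditional on $\bmy$, the sum $Z$ is a linear combination of the independent circularly symmetric complex Gaussians $\{x^{(k)}\}$ with \emph{deterministic} weights $\{y^{(k)}\}$; hence $Z\mid\bmy$ is itself a circularly symmetric complex Gaussian random variable with conditional variance $S=\sum_{k=1}^{\MR}|y^{(k)}|^{2}$.

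Next, I would invoke the standard identity that a $\mathcal{CN}(0,\sigma^{2})$ random variable has fourth absolute moment equal to $2\sigma^{4}$ (which follows, e.g., from the fact that $|Z|^{2}$ is exponential with mean $\sigma^{2}$, so its second moment is $2\sigma^{4}$). This yields $\mathsf{E}\!\left[|Z|^{4}\mid\bmy\right]=2S^{2}$, and taking total expectations reduces the problem to computing $\mathsf{E}[S^{2}]$. Since each $|y^{(k)}|^{2}$ is an i.i.d.\ unit-mean exponential random variable, $S$ follows a $\mathrm{Gamma}(\MR,1)$ distribution, so that $\mathsf{E}[S]=\MR$ and $\mathrm{Var}(S)=\MR$. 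Consequently $\mathsf{E}[S^{2}]=\MR+\MR^{2}=\MR(\MR+1)$, which upon multiplication by $2$ gives the claimed value $2\MR(\MR+1)$.

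There is no serious obstacle in this approach; the one conceptual step requiring attention is the observation that conditioning on $\bmy$ makes $Z$ conditionally circularly symmetric, which relies on the $\{x^{(k)}\}$ being circularly symmetric so that multiplying them by deterministic complex scalars preserves the distribution class. Once that is in place, the remainder is just the second moment of a Gamma random variable.

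As a sanity-check alternative, one could instead expand $|Z|^{4}=Z\bar Z Z\bar Z$ into a quadruple sum $\sum_{k,\ell,m,n} x^{(k)}\bar x^{(\ell)} x^{(m)}\bar x^{(n)}\, y^{(k)}\bar y^{(\ell)} y^{(m)}\bar y^{(n)}$, use independence of $\{x^{(k)}\}$ and $\{y^{(k)}\}$ to split the expectation, and apply Isserlis' (Wick's) theorem to identify the surviving index patterns, namely those with $\{k,m\}=\{\ell,n\}$ as multisets. Counting the contributions of (i) all four indices equal (count $\MR$, contribution $2\cdot 2=4$) and (ii) exactly two distinct values in two pairings (total count $2\MR(\MR-1)$, contribution $1\cdot 1=1$) gives $4\MR+2\MR(\MR-1)=2\MR(\MR+1)$, in agreement with the conditioning argument. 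The conditioning derivation is, however, considerably shorter and is the one I would present.
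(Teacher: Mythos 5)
Your proof is correct, but your primary argument takes a genuinely different route from the paper's. The paper proves the lemma exactly the way you describe in your ``sanity-check alternative'': it expands the quartic $\bigl|\sum_k x^{(k)}y^{(k)}\bigr|^4$, identifies the surviving index patterns, and counts $\MR$ terms of the form $|x^{(k)}|^4|y^{(k)}|^4$ (each contributing $\mathsf{E}[|x|^4]\mathsf{E}[|y|^4]=4$) plus the paired cross terms (each contributing $1$), arriving at $4\MR+2\MR(\MR-1)=2\MR(\MR+1)$; your Wick-theorem bookkeeping of the multiset condition $\{k,m\}=\{\ell,n\}$ is in fact a cleaner and more carefully justified version of the paper's rather terse counting. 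Your main argument---conditioning on $\bmy$ so that $Z\mid\bmy\sim\mathcal{CN}(0,S)$ with $S=\sum_k|y^{(k)}|^2$, using $\mathsf{E}[|Z|^4\mid\bmy]=2S^2$, and then computing $\mathsf{E}[S^2]=\MR(\MR+1)$ from the $\mathrm{Gamma}(\MR,1)$ law of $S$---is shorter, avoids the combinatorial case analysis entirely, and makes the structural reason for the answer transparent (it is just the second moment of a chi-square-type variable); its only cost is that it leans on the closure of circularly symmetric Gaussians under deterministic linear combinations, which you correctly flag as the one step needing care. Either proof is acceptable; the conditioning argument generalizes more gracefully (e.g., to higher moments of $|Z|$), while the paper's expansion is self-contained and uses only elementary moment identities.
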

\begin{proof}
We have 
\begin{align*}
& \mathsf{E}\!\left[\left|\sum_{k=1}^{\MR}x^{(k)}y^{(k)}\right|^4\right] 
\!\!=\!\mathsf{E}\!\left[\left(\sum_{k=1}^{\MR}x^{(k)}y^{(k)}\sum_{k=1}^{\MR}\left(x^{(k)}y^{(k)}\right)^*\right)^2\right]\\
& \qquad ={\MR\choose2}\mathsf{E}\!\left[|x^{(k)}|^2|y^{(k)}|^2\right]+4\mathsf{E}\!\left[|x^{(k)}|^4|y^{(k)}|^4\right]\\
& \qquad =2 \MR(\MR-1)+4 \MR=2\MR^2+2 \MR. 
\end{align*}
The above steps  can be summarized as follows. After expanding the quadratic expression, the non-zero terms can be written as $|x^{(k)}|^4|y^{(k)}|^4$ and $|x^{(k)}|^2|y^{(k)}|^2$, where $k=1,\ldots,\MR$. Then, there are $\MR$ terms of the form $|x^{(k)}|^4|y^{(k)}|^4$ and ${\MR\choose2}$ of the form $|x^{(k)}|^2|y^{(k)}|^2$. The facts that $\mathsf{E}\left[|x^{(k)}|^4\right]=\mathsf{E}\left[|y^{(k)}|^4\right]=2$ and $\mathsf{E}\left[|x^{(k)}|^2\right]=\mathsf{E}\left[|y^{(k)}|^2\right]=1$ concludes the proof.
\end{proof}
\begin{lem}
\label{lem2}
Let $\MR>4$ and $x^{(k)}$, $k=1,\ldots,\MR$ be i.i.d.\ circularly symmetric complex Gaussian with unit variance  and $g = \sum_{k=1}^{\MR}\mid x^{(k)} \mid^2$. Then, 
\begin{align} \label{eq:lemma2}
\mathsf{E}\!\left[\left|g^{-1}\right|^4\right]= \left((\MR-1)(\MR-2)(\MR-3)(\MR-4)\right)^{-1}.
\end{align}
\end{lem}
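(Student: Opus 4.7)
The plan is to identify the distribution of $g$ explicitly and then evaluate $\mathsf{E}[g^{-4}]$ as a one-dimensional integral against that density. Since each $x^{(k)}$ is circularly symmetric complex Gaussian with unit variance, its real and imaginary parts are independent $\mathcal{N}(0,1/2)$, so $|x^{(k)}|^2$ is exponentially distributed with mean $1$ (equivalently, a $\mathrm{Gamma}(1,1)$ random variable). Consequently, the sum $g=\sum_{k=1}^{\MR}|x^{(k)}|^2$ of $\MR$ i.i.d.\ exponentials has an Erlang (equivalently $\mathrm{Gamma}(\MR,1)$) distribution with density
\begin{align*}
f_g(t)=\frac{t^{\MR-1}e^{-t}}{(\MR-1)!},\quad t\geq 0.
\end{align*}

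Given this, I would simply compute
\begin{align*}
\mathsf{E}[g^{-4}]=\int_{0}^{\infty}t^{-4}\,\frac{t^{\MR-1}e^{-t}}{(\MR-1)!}\,dt=\frac{1}{(\MR-1)!}\int_{0}^{\infty}t^{\MR-5}e^{-t}\,dt=\frac{\Gamma(\MR-4)}{(\MR-1)!},
\end{align*}
and then recognize that $\Gamma(\MR-4)=(\MR-5)!$ so that $\Gamma(\MR-4)/(\MR-1)!=1/\big((\MR-1)(\MR-2)(\MR-3)(\MR-4)\big)$, which is exactly \eqref{eq:lemma2}.

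The only subtle point is convergence of the integral at the origin: the integrand behaves like $t^{\MR-5}$ near $0$, so the integral converges precisely when $\MR-5>-1$, i.e.\ $\MR>4$, which is the hypothesis of the lemma. I would flag this as the justification for the assumption $\MR>4$; the decay at infinity is automatic from the exponential factor. There is no real obstacle here beyond recalling the density of a sum of i.i.d.\ unit-mean exponentials; the argument is a direct computation.
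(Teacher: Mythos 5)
Your proof is correct and is essentially the same as the paper's: both identify the exact distribution of $g$ (you as $\mathrm{Gamma}(\MR,1)$, the paper as $\tfrac{1}{2}\chi^2_{2\MR}$, which are the same law) and read off the negative fourth moment, with the hypothesis $\MR>4$ entering precisely to make that moment finite. The only difference is that you evaluate the Gamma integral directly, which is self-contained and sidesteps the factor-of-two bookkeeping the paper incurs by citing the moments of the inverse chi-square distribution.
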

\begin{proof}
We first rewrite $g$ as $2^{-1}\sum_{k=1}^{2\MR}\mid s^{(k)} \mid^2$ where $s^{(k)}$, $k=1,\ldots,2\MR$, are i.i.d.\ zero-mean real-valued Gaussian with unit variance. Then, $2g^{-1}$ is \revision{an} inverse chi-square random variable with $2\MR$ degrees of freedom. The inverse chi-square distribution with $2\MR$ degrees of freedom $\chi(2\MR)$ corresponds to an inverse-Gamma distribution with $2B$ degrees-of-freedom. 
The $4^\text{th}$ moment of this inverse chi-square distribution is given by $\frac{1}{16}(\MR-1)(\MR-2)(\MR-3)(\MR-4)$~\cite{cook2008notes} and, hence, we obtain~\eqref{eq:lemma2}.
\end{proof}
\begin{lem}
\label{lem3}

Let $\MR>4$ and  the entries of $\bH_w\in\mathbb{C}^{\MR\times\MT}$ be i.i.d.\ circularly symmetric complex Gaussian with unit variance. Then, we have
\begin{align*}
&\mathsf{E}\left[\|\bD_w^{-1}\bE_w\|^2_{F}\right] \\ 
& \qquad \leq\left(\MT^2-\MT\right)\sqrt{\frac{2\MR(\MR+1)}{(\MR-1)(\MR-2)(\MR-3)(\MR-4)}}
\end{align*}
\end{lem}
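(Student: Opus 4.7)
The strategy is to write the Frobenius norm as a sum over off-diagonal entries, reduce to a scalar expectation, and then invoke Lemmata \ref{lem1} and \ref{lem2} after decoupling a correlated numerator and denominator via Cauchy--Schwarz.

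First, I would make the Gram structure explicit. With $\bmh_{i,w}$ denoting the $i^\text{th}$ column of $\bH_w$, the diagonal entries of $\bA_w$ are $d^{(i,i)}_w = \|\bmh_{i,w}\|_2^2 + N_0E_s^{-1}$ and the entries of the hollow part are $e^{(i,j)}_w = \bmh_{i,w}^H\bmh_{j,w}$ for $i\neq j$ (and $0$ otherwise). Consequently, since $\bD_w^{-1}\bE_w$ has entries $(d^{(i,i)}_w)^{-1}e^{(i,j)}_w$ on the off-diagonal and zeros on the main diagonal, linearity of expectation gives
\begin{equation*}
\mathsf{E}\!\left[\|\bD_w^{-1}\bE_w\|_F^2\right]
=\sum_{i\neq j}\mathsf{E}\!\left[\frac{|e^{(i,j)}_w|^2}{|d^{(i,i)}_w|^2}\right],
\end{equation*}
a sum over exactly $U^2-U$ identically distributed terms.

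Next, the key step: decouple numerator and denominator. Because $e^{(i,j)}_w$ and $d^{(i,i)}_w$ both involve $\bmh_{i,w}$, they are not independent, so I cannot factor the expectation. Instead, I would apply the Cauchy--Schwarz inequality in expectation form to obtain
\begin{equation*}
\mathsf{E}\!\left[\frac{|e^{(i,j)}_w|^2}{|d^{(i,i)}_w|^2}\right]
\leq\sqrt{\mathsf{E}\!\left[|e^{(i,j)}_w|^4\right]\cdot \mathsf{E}\!\left[|d^{(i,i)}_w|^{-4}\right]}.
\end{equation*}
This is the main obstacle, and Cauchy--Schwarz is the cleanest way past it; tighter bounds are possible by conditioning on $\bmh_{i,w}$ (since then $e^{(i,j)}_w$ is a conditional Gaussian in the independent vector $\bmh_{j,w}$), but Cauchy--Schwarz already suffices for the claimed inequality.

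To finish, I bound each factor separately. The fourth moment $\mathsf{E}[|e^{(i,j)}_w|^4]$ is precisely the quantity computed in Lemma \ref{lem1} (with $B$ summands of products of independent unit-variance complex Gaussians), giving $2B(B+1)$. For the denominator, I would use the regularization to drop the non-negative term $N_0E_s^{-1}$, yielding $|d^{(i,i)}_w|^{-4}\leq (\|\bmh_{i,w}\|_2^2)^{-4}$, so that Lemma \ref{lem2} applies with $g = \|\bmh_{i,w}\|_2^2$ and gives $\mathsf{E}[|d^{(i,i)}_w|^{-4}]\leq ((B-1)(B-2)(B-3)(B-4))^{-1}$, which requires the hypothesis $B>4$. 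Substituting the two bounds into the Cauchy--Schwarz estimate and multiplying by the $U^2-U$ off-diagonal pairs yields the claimed inequality.
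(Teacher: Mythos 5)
Your proposal is correct and follows essentially the same route as the paper's proof: express the Frobenius norm as a sum over the $\MT^2-\MT$ off-diagonal ratios, decouple numerator and denominator via Cauchy--Schwarz, drop the non-negative regularization term $N_0E_s^{-1}$, and invoke Lemmata~\ref{lem1} and~\ref{lem2} for the two fourth moments. The only (immaterial) difference is that you drop the regularization after applying Cauchy--Schwarz whereas the paper drops it first.
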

\begin{proof}
The regularized Gram matrix corresponds to $\bA_w= \bD_w+\bE_w = \bG_w+N_0{E_s}^{-1}\bI_{\MT\times\MT}$. Thus, each element on the $i^\text{th}$ row and $j^\text{th}$ column of $\bA_w$, $a_w^{(i,j)}$ can be written as:
\begin{align*}
a_w^{(i,j)}\!=\!
\left\{\begin{array}{ll}
 	\!\!g_w^{(i,j)}\!=\!\sum_{k=1}^{\MR}\left(h_w^{(k,i)}\right)^*h_w^{(k,j)}, & \!\!\! i \neq j\\
	\!\!g_w^{(i,i)}\!+\!{N_0}E_s^{-1}=\sum_{k=1}^{\MR}\left|h_w^{(k,i)}\right|^2\!+\!{N_0}E_s^{-1}, & \!\!\! i = j,
\end{array}\right.
\end{align*}
with $g_w^{(i,j)}$ corresponding to the $i^\text{th}$ row and $j^\text{th}$ column of the Gram matrix $\bG_w$. 
We now have the following inequality:
\begin{align*}
\mathsf{E}\left[\|\bD_w^{-1}\bE_w\|^2_{F}\right]
&= \mathsf{E}\left[\sum_{i=1}^{i=\MT}\sum_{j=1, i \neq j}^{j=\MT}\left|\frac{g_w^{(i,j)}}{a_w^{(i,i)}}\right|^2\right] \\
& \leq \sum_{i=1}^{i=\MT}\sum_{j=1, i \neq j}^{j=\MT}\mathsf{E}
\left[\left|\frac{g_w^{(i,j)}}{g_w^{(i,i)}}\right|^2\right],
\end{align*}
which is obtained by omitting the non-negative regularization term $N_0E_s^{-1}$.
By applying the Cauchy-Schwarz inequality, we can bound $\mathsf{E}\left[\|\bD_w^{-1}\bE_w\|^2_{F}\right]$ from above as 
\begin{align*}
\mathsf{E}\!\left[\|\bD_w^{-1}\bE_w\|^2_{F}\right] &\leq \sum_{i=1}^{i=\MT}\sum_{j=1, i \neq j}^{j=\MT}
\!\!\sqrt{
\mathsf{E}\!\left[\left|g_w^{(i,j)}\right|^4\right]
\mathsf{E}\!\left[\left|\left(g_w^{(i,i)}\right)^{-1}\right|^4\right]}.
\end{align*}
Application of Lemmata~\ref{lem1} and~\ref{lem2} to the first and second expected values, respectively, we obtain
\begin{align*}
\mathsf{E}\left[\|\bD_w^{-1}\bE_w\|^2_{F}\right] &\!\leq\! \sum_{i=1}^{i=\MT}\sum_{j=1, i \neq j}^{j=\MT}
\!\!\sqrt{\frac{2\MR(\MR\!+\!1)}{(\MR\!-\!1)(\MR\!-\!2)(\MR\!-\!3)(\MR\!-\!4)} } \\
 &=\left(\MT^2\!-\!\MT\right)\sqrt{\frac{2\MR(\MR\!+\!1)}{(\MR\!-\!1)(\MR\!-\!2)(\MR\!-\!3)(\MR\!-\!4)} }.
\end{align*}
\end{proof}

We are now in position to prove Theorem \ref{theorem}. 
To this end, we start by using Markov's inequality to obtain the following straightforward inequality:
\begin{align*}
\mathrm{Pr}\!\left\{\|\bD_w^{-1}\bE_w\|^K_{F}\geq\alpha\right\} &= \mathrm{Pr}\!\left\{\|\bD_w^{-1}\bE_w\|^2_{F}\geq\alpha^\frac{2}{K}\right\} \\
& \leq \alpha^{-\frac{2}{K}}\mathsf{E}\!\left[\|\bD_w^{-1}\bE_w\|^2_{F}\right].
\end{align*}
With  $\mathrm{Pr}\!\left\{\|\bD_w^{-1}\bE_w\|^K_{F}<\alpha\right\} = 1-\mathrm{Pr}\!\left\{\|\bD_w^{-1}\bE_w\|^K_{F}\geq\alpha\right\}$ and by using the upper bound for $\mathsf{E}\!\left[\|\bD_w^{-1}\bE_w\|^2_{F}\right]$ from Lemma~\ref{lem3}, we finally obtain  \eqref{eq:thmcondition}.


\bibliographystyle{IEEEtran}
\bibliography{IEEEabrv,Massive_MIMO_bibfile}

\begin{thebibliography}{10}
\providecommand{\url}[1]{#1}
\csname url@samestyle\endcsname
\providecommand{\newblock}{\relax}
\providecommand{\bibinfo}[2]{#2}
\providecommand{\BIBentrySTDinterwordspacing}{\spaceskip=0pt\relax}
\providecommand{\BIBentryALTinterwordstretchfactor}{4}
\providecommand{\BIBentryALTinterwordspacing}{\spaceskip=\fontdimen2\font plus
\BIBentryALTinterwordstretchfactor\fontdimen3\font minus
  \fontdimen4\font\relax}
\providecommand{\BIBforeignlanguage}[2]{{%
\expandafter\ifx\csname l@#1\endcsname\relax
\typeout{** WARNING: IEEEtran.bst: No hyphenation pattern has been}%
\typeout{** loaded for the language `#1'. Using the pattern for}%
\typeout{** the default language instead.}%
\else
\language=\csname l@#1\endcsname
\fi
#2}}
\providecommand{\BIBdecl}{\relax}
\BIBdecl

\bibitem{Wu2012}
M.~Wu, B.~Yin, A.~Vosoughi, C.~Studer, J.~R. Cavallaro, and C.~Dick,
  ``Approximate matrix inversion for high-throughput data detection in the
  large-scale {MIMO} uplink,'' in \emph{Proc. IEEE ISCAS}, Beijing, China, May
  2013, pp. 2155--2158.

\bibitem{Yin2013}
B.~Yin, M.~Wu, C.~Studer, J.~R. Cavallaro, and C.~Dick, ``Implementation
  trade-offs for linear detection in large-scale {MIMO} systems,'' in
  \emph{Proc. IEEE ICASSP}, Vancouver, BC, May 2013, pp. 2679--2683.

\bibitem{Paulraj2008}
A.~Paulraj, R.~Nabar, and D.~Gore, \emph{Introduction to Space-Time Wireless
  Communications}.\hskip 1em plus 0.5em minus 0.4em\relax New York, USA:
  Cambridge University Press, 2008.

\bibitem{3GPPLTE}
\emph{3rd Generation Partnership Project; Technical Specification Group Radio
  Access Network; Evolved Universal Terrestrial Radio Access ({E-UTRA});
  Multiplexing and channel coding (Release 9)}.\hskip 1em plus 0.5em minus
  0.4em\relax 3GPP Organizational Partners TS 36.212 Rev. 8.3.0, May 2008.

\bibitem{SesiaLTE}
S.~Sesia, I.~Toufik, and M.~Baker, \emph{{LTE}, The {UMTS} Long Term Evolution:
  From Theory to Practice}.\hskip 1em plus 0.5em minus 0.4em\relax Wiley
  Publishing, 2009.

\bibitem{3GPPLTEA}
\emph{3rd Generation Partnership Project; Technical Specification Group Radio
  Access Network; Evolved Universal Terrestrial Radio Access ({E-UTRA});
  Physical Layer Procedures (Release 10)}.\hskip 1em plus 0.5em minus
  0.4em\relax 3GPP Organizational Partners TS 36.213 version 10.10.0, Jul.
  2013.

\bibitem{IEEE802.11n}
\emph{{IEEE} Draft Standard Part 11: Wireless {LAN} Medium Access Control
  ({MAC}) and Physical Layer ({PHY}) specifications: Amendment 4: Enhancements
  for Higher Throughput}.\hskip 1em plus 0.5em minus 0.4em\relax
  P802.11n\_D3.00, Sept. 2007.

\bibitem{Marzetta2010}
T.~L. Marzetta, ``Noncooperative cellular wireless with unlimited numbers of
  base station antennas,'' \emph{IEEE Trans. Wireless Commun.}, vol.~9, no.~11,
  pp. 3590--3600, Nov. 2010.

\bibitem{Rusek2012}
F.~Rusek, D.~Persson, B.~K. Lau, E.~G. Larsson, T.~L. Marzetta, O.~Edfors, and
  F.~Tufvesson, ``Scaling up {MIMO}: Opportunities and challenges with very
  large arrays,'' \emph{IEEE Signal Process. Mag.}, vol.~30, no.~1, pp. 40--60,
  Jan. 2013.

\bibitem{Nam2013}
Y.-H. Nam, B.~L. Ng, K.~Sayana, Y.~Li, J.~Zhang, Y.~Kim, and J.~Lee,
  ``Full-dimension {MIMO (FD-MIMO)} for next generation cellular technology,''
  \emph{IEEE Commun. Mag.}, vol.~51, no.~6, pp. 172--179, Jun. 2013.

\bibitem{Huh2011}
H.~Huh, G.~Caire, H.~C. Papadopoulos, and S.~A. Ramprashad, ``Achieving
  ``massive {MIMO}'' spectral efficiency with a not-so-large number of
  antennas,'' \emph{IEEE Trans. Wireless Commun.}, vol.~11, no.~9, pp.
  3266--3239, Sept. 2012.

\bibitem{Ngo2012}
H.~Q. Ngo, E.~G. Larsson, and T.~L. Marzetta, ``Energy and spectral efficiency
  of very large multiuser {MIMO} systems,'' \emph{arXiv preprint: 1112.3810v2},
  May 2012.

\bibitem{agrell2002closest}
E.~Agrell, T.~Eriksson, A.~Vardy, and K.~Zeger, ``Closest point search in
  lattices,'' \emph{IEEE Trans. Inf. Theory}, vol.~48, no.~8, pp. 2201--2214,
  2002.

\bibitem{ABurgThesis}
A.~Burg, ``{VLSI} circuits for {MIMO} communication systems,'' Ph.D.
  dissertation, ETH Z\"urich, Switzerland, 2006.

\bibitem{burg2005vlsi}
A.~Burg, M.~Borgmann, M.~Wenk, M.~Zellweger, W.~Fichtner, and H.~Bolcskei,
  ``{VLSI} implementation of {MIMO} detection using the sphere decoding
  algorithm,'' \emph{IEEE J. Solid-State Circuits}, vol.~40, no.~7, pp.
  1566--1577, Jul. 2005.

\bibitem{wong2002vlsi}
K.~Wong, C.~Tsui, R.~Cheng, and W.~Mow, ``A {VLSI} architecture of a {K-best}
  lattice decoding algorithm for {MIMO} channels,'' in \emph{IEEE ISCAS},
  vol.~3, Scottsdale, AZ, May 2002, pp. 273--276.

\bibitem{HB03}
B.~M. Hochwald and S.~ten Brink, ``Achieving near-capacity on a
  multiple-antenna channel,'' \emph{IEEE Trans. Commun.}, vol.~51, no.~3, pp.
  389--399, 2003.

\bibitem{jsac07}
C.~Studer, A.~Burg, and H.~B\"olcskei, ``Soft-output sphere decoding:
  Algorithms and {VLSI} implementation,'' \emph{IEEE J. Sel. Areas Commun.},
  vol.~26, no.~2, pp. 290--300, Feb. 2008.

\bibitem{jalden2005complexity}
J.~Jald\`en and B.~Ottersten, ``On the complexity of sphere decoding in digital
  communications,'' \emph{IEEE Trans. Signal Process.}, vol.~53, no.~4, pp.
  1474--1484, Apr. 2005.

\bibitem{seethaler2011complexity}
D.~Seethaler, J.~Jald{\'e}n, C.~Studer, and H.~Bolcskei, ``On the complexity
  distribution of sphere decoding,'' \emph{IEEE Trans. Inf. Theory}, vol.~57,
  no.~9, pp. 5754--5768, Sept. 2011.

\bibitem{Datta2012}
T.~Datta, N.~Ashok~Kumar, A.~Chockalingam, and B.~Sundar~Rajan, ``A novel
  {MCMC} algorithm for near-optimal detection in large-scale uplink mulituser
  {MIMO} systems,'' in \emph{Proc. IEEE ITA}, San Diego, CA, Feb. 2012, pp. 69
  --77.

\bibitem{OFDM2004}
R.~Prasad, \emph{{OFDM} for Wireless Communications Systems}.\hskip 1em plus
  0.5em minus 0.4em\relax Norwood, MA, USA: Artech House, Inc., 2004.

\bibitem{3GPP_TS_36.211_v8.6.0}
``{3GPP TS 36.211} {Evolved} {Universal} {Terrestrial} {Radio} {Access}
  ({E-UTRA}) physical channels and modulation (release 8),'' 3rd Generation
  Partnership Project.

\bibitem{khan2009lte}
F.~Khan, \emph{{LTE} for {4G} mobile broadband: air interface technologies and
  performance}.\hskip 1em plus 0.5em minus 0.4em\relax Cambridge University
  Press, 2009.

\bibitem{Studer2011}
C.~Studer, S.~Fateh, and D.~Seethaler, ``{ASIC} implementation of soft-input
  soft-output {MIMO} detection using {MMSE} parallel interference
  cancellation,'' \emph{IEEE J. Solid-State Circuits}, vol.~46, no.~7, pp.
  1754--1765, Jul. 2011.

\bibitem{GV96}
G.~H. Golub and C.~F. {van Loan}, \emph{Matrix Computations}, 3rd~ed.\hskip 1em
  plus 0.5em minus 0.4em\relax The Johns Hopkins Univ. Press, 1996.

\bibitem{fossorier1998equivalence}
M.~P. Fossorier, F.~Burkert, S.~Lin, and J.~Hagenauer, ``On the equivalence
  between {SOVA} and {max-log-MAP} decodings,'' \emph{IEEE Commun. Lett.},
  vol.~2, no.~5, pp. 137--139, May 1998.

\bibitem{burg2006algorithm}
A.~Burg, S.~Haene, D.~Perels, P.~Luethi, N.~Felber, and W.~Fichtner,
  ``Algorithm and {VLSI} architecture for linear {MMSE} detection in
  {MIMO-OFDM} systems,'' in \emph{Proc. IEEE ISCAS}, Island of Kos, Greece, May
  2006, pp. 4102--4105.

\bibitem{rao2010low}
R.~M. Rao, H.~Tarn, R.~Mazahreh, and C.~Dick, ``A low complexity square root
  {MMSE MIMO} decoder,'' in \emph{Proc. 44th Asilomar Conf. on Signals, Systems
  and Computers}, Pacific Grove, CA, Nov. 2010, pp. 1463--1467.

\bibitem{luethi2008gram}
P.~Luethi, C.~Studer, S.~Duetsch, E.~Zgraggen, H.~Kaeslin, N.~Felber, and
  W.~Fichtner, ``{Gram-Schmidt-based} {QR} decomposition for {MIMO} detection:
  {VLSI} implementation and comparison,'' in \emph{Proc. IEEE APCCAS}, Macao,
  China, Nov. 2008, pp. 830--833.

\bibitem{Stewart1998}
G.~Stewart, \emph{Matrix Algorithms: Basic decompositions}, 1998.

\bibitem{hoydis2011massive}
J.~Hoydis, S.~Ten~Brink, and M.~Debbah, ``Massive {MIMO}: How many antennas do
  we need?'' in \emph{49th Ann. Allerton Conf. on Commun., Control., and
  Comput.}, Monticello, IL, Sept. 2011, pp. 545--550.

\bibitem{SamsungFDMIMO}
\emph{New SID Proposal: Study on Full Dimension MIMO for LTE}.\hskip 1em plus
  0.5em minus 0.4em\relax 3GPP TSG RAN Meeting 58, Dec. 2012.

\bibitem{ArgosV2}
C.~Shepard, H.~Yu, and L.~Zhong, ``{ArgosV2}: a flexible many-antenna research
  platform,'' in \emph{Proc.~19th Annual International Conference on Mobile
  Computing and Networking (MobiCom)}.\hskip 1em plus 0.5em minus 0.4em\relax
  Miami, Florida, USA: ACM, 2013, pp. 163--166.

\bibitem{winner2}
\BIBentryALTinterwordspacing
L.~Hentil{\"a}, P.~Ky{\"o}sti, M.~K{\"a}ske, M.~Narandzic, and M.~Alatossava.
  (2007, Dec.) Matlab implementation of the {WINNER} phase {II} channel model
  ver~1.1. [Online]. Available:
  \url{https://www.ist-winner.org/phase\_2\_model.html}
\BIBentrySTDinterwordspacing

\bibitem{HoydisChannel2012}
J.~Hoydis, C.~Hoek, T.~Wild, and S.~ten Brink, ``Channel measurements for large
  antenna arrays,'' in \emph{Proc. IEEE ISWCS}, Aug. 2012.

\bibitem{Simko2011}
M.~Simko, D.~Wu, C.~Mehlfuehrer, J.~Eilert, and D.~Liu, ``Implementation
  aspects of channel estimation for {3GPP LTE} terminals,'' in \emph{Proc. 11th
  European Wireless Conference - Sustainable Wireless Technologies (European
  Wireless)}, Vienna, Austria, Apr. 2011, pp. 440--444.

\bibitem{DFTcorexilinx}
{Xilinx Inc.}, ``{IP LogiCORE Discrete Fourier Transform},'' {DFT IP for 3GPP
  LTE systems}, DS615 v3.1, Mar. 2011.

\bibitem{schreiber1986systolic}
R.~Schreiber and W.-P. Tang, ``On systolic arrays for updating the {Cholesky}
  factorization,'' \emph{BIT Numerical Mathematics}, vol.~26, no.~4, pp.
  451--466, Dec. 1986.

\bibitem{myllyla2005complexity}
M.~Myllyla, J.~Hintikka, J.~R. Cavallaro, M.~Juntti, M.~Limingoja, and
  A.~Byman, ``Complexity analysis of {MMSE} detector architectures for {MIMO
  OFDM} systems,'' in \emph{Proc. 39th Asilomar Conf. on Signals, Systems and
  Computers}, Nov. 2005, pp. 75--81.

\bibitem{eberli2008divide}
S.~Eberli, D.~Cescato, and W.~Fichtner, ``Divide-and-conquer matrix inversion
  for linear {MMSE} detection in {SDR} {MIMO} receivers,'' in \emph{Proc. 26th
  Norchip Conference}, Nov. 2008, pp. 162--167.

\bibitem{WuDi2011}
D.~Wu, J.~Eilert, and D.~Liu, ``Implementation of a high-speed {MIMO}
  soft-output symbol detector for software defined radio,'' \emph{J. Signal
  Process. Syst.}, vol.~63, no.~1, pp. 40--60, Apr. 2011.

\bibitem{karkooti2005fpga}
M.~Karkooti, J.~R. Cavallaro, and C.~Dick, ``{FPGA} implementation of matrix
  inversion using {QRD-RLS} algorithm,'' in \emph{Proc. 44th Asilomar Conf. on
  Signals, Systems and Computers}, Nov. 2005, pp. 1625--1629.

\bibitem{Cong2011}
J.~Cong, B.~Liu, S.~Neuendorffer, J.~Noguera, K.~Vissers, and Z.~Zhang,
  ``High-level synthesis for {FPGAs}: From prototyping to deployment,''
  \emph{IEEE Trans. Comput.-Aided Design Integr. Circuits Syst.}, vol.~30,
  no.~4, pp. 473--491, Apr. 2011.

\bibitem{kim2007efficient}
H.~S. Kim, W.~Zhu, J.~Bhatia, K.~Mohammed, A.~Shah, and B.~Daneshrad, ``An
  efficient {FPGA} based {MIMO-MMSE} detector,'' in \emph{Proc. EUSIPCO}, Sept.
  2007, pp. 1131--1135.

\bibitem{Eilert2008}
J.~Eilert, D.~Wu, and D.~Liu, ``Implementation of a programmable linear {MMSE}
  detector for {MIMO-OFDM},'' in \emph{Proc. IEEE ICASSP}, Mar. 2008, pp.
  5396--5399.

\bibitem{wu2010vlsi}
D.~Wu, J.~Eilert, R.~Asghar, and D.~Liu, ``{VLSI} implementation of a
  fixed-complexity soft-output {MIMO} detector for high-speed wireless,''
  \emph{EURASIP Journal on Wireless Communications and Networking}, vol. 2010,
  pp. 58:1--58:13, Apr. 2010.

\bibitem{Yin2014}
B.~Yin, M.~Wu, G.~Wang, C.~Dick, J.~R. Cavallaro, and C.~Studer, ``A
  3.8\,{Gb/s} large-scale {MIMO} detector for {3GPP LTE-Advanced},'' in
  \emph{Proc. IEEE ICASSP}, Florcence, Italy, May 2014.

\bibitem{SB10}
C.~Studer and H.~B\"olcskei, ``Soft--input soft--output single tree-search
  sphere decoding,'' \emph{IEEE Trans. Inf. Theory}, vol.~56, no.~10, pp.
  4827--4842, Oct. 2010.

\bibitem{cook2008notes}
J.~D. Cook, ``Inverse {Gamma} distribution,'' {online:
  \url{http://www.johndcook.com/inverse\_gamma.pdf}}, Tech. Rep., 2008.

\end{thebibliography}


\end{document}